\newtheorem{theorem}{Theorem}
\newtheorem{corollary}{Corollary}
\newtheorem{remark}{Remark}
\begin{document}

\title{Taming Tail Latency for Erasure-coded, Distributed Storage Systems}

\author{Vaneet Aggarwal, Abubakr O. Al-Abbasi, Jingxian Fan, and Tian Lan \thanks{V. Aggarwal. A. O. Al-Abbasi, and J. Fan are with Purdue University, West Lafayette IN 47907, email: \{vaneet,aalabbas,fan137\}@purdue.edu. T. Lan is with the George Washington University, Washington, DC 20052, email: tlan@gwu.edu.  This work was presented in part at the IEEE International Conference on Computer Communications (Infocom) 2017 \cite{Jingxian}.}   }

\maketitle

\begin{abstract}
Distributed storage systems are known to be susceptible to long tails in response time. In modern online storage systems such as Bing, Facebook, and Amazon, the long tails of the service latency are of particular concern. with 99.9th percentile response times being orders of magnitude worse than the mean. As erasure codes emerge as a popular technique to achieve high data reliability in distributed storage while attaining space efficiency, taming tail latency still remains an open problem due to the lack of mathematical models for analyzing such systems. To this end, we propose a framework for quantifying and optimizing tail latency in erasure-coded storage systems. In particular, we derive upper bounds on tail latency in closed-form for arbitrary service time distribution and heterogeneous files. Based on the model, we formulate an optimization problem to jointly minimize weighted latency tail probability of all files over the placement of files on the servers, and the choice of servers to access the requested files. The non-convex problem is solved using an efficient, alternating optimization algorithm. Numerical results show significant reduction of tail latency for erasure-coded storage systems with realistic workload.
\end{abstract}

\begin{IEEEkeywords} 
	Tail latency, Erasure coding, Distributed Storage Systems, Bi-partite matching, Alternating optimization, Laplace Stieltjes transform.
\end{IEEEkeywords}

\section{Introduction}

Due to emerging applications such as big data analytics and cloud computing, distributed storage systems today often store multiple petabytes of data \cite{Sathiamoorthy13,Asure14,Fikes10}. As a result, these systems are transitioning from full data replication to the use of erasure code for encoding and spreading data chunks across multiple machines and racks, in order to achieve more efficient use of storage space while maintaining high reliability despite system failures. It is shown that using erasure codes can reduce the cost of storage by more than 50\% \cite{Asure14} due to smaller storage space and datacenter footprint.

A key tradeoff for using erasure codes is performance. Distributed storage systems that employ erasure codes are known to be susceptible to long latency tails. Under full data replication, if a file is replicated $n$ times, it can be recovered from any of the $n$ replica copies. However, for an erasure-coded storage system using an $(n,k)$ code, a file is encoded into $n$ equal-size data chunks, allowing reconstruction from any subset of $k<n$ chunks. Thus, reconstructing the file requires fetching $k$ distinct chunks from different servers, which leads to significant increase of tail latency, since service latency in such systems is determined by the {\em hottest} storage nodes with highest congestion and slowest speed, which effectively become performance bottlenecks. It has been shown that in modern Web applications such as Bing, Facebook, and Amazon's retail platform, the long tail of latency is of particular concern, with $99.9$th percentile response times that are orders of magnitude worse than the mean \cite{T1,T2}. Despite mechanisms such as load-balancing and resource management, evaluations on large scale storage systems indicate that there is a high degree of randomness in delay performance \cite{Liang:2014}. The overall response time in erasure coded data-storage systems is dominated by the long tail distribution of the required parallel operations \cite{Barroso:2011}.

To the best of our knowledge, quantifying the impact of erasure coding on tail latency is an open problem for  distributed storage systems. Although  recent research progress has been made on providing bounds of mean service latency \cite{ISIT:12,Joshi:13,lee2017mds,Tian_latency,Yu_TON}, much less is known on tail latency (i.e., $x$th-percentile latency for arbitrary $x\in[0,1]$) in erasure-coded storage systems.  Mean Service latency for replication-based systems for identical servers with independent exponential service-times has been characterized for homogeneous files in \cite{gardner2016understanding}. However, the problem for erasure-coded based systems is still an open problem.  To provide an upper bound on mean service latency of homogeneous files, {\em Fork-join queue analysis} in \cite{Makowski:89,JK13,LK13,Joshi:13,CS14,KumarTC14} provides upper bounds for mean service latency by forking each file request to all storage nodes. In a separate line of work, {\em Queuing-theoretic analysis} in \cite{ISIT:12,lee2017mds} proposes a {\em block-$t$-scheduling} policy that only allows the first $t$ requests at the head of the buffer to move forward. However, both approaches fall short of quantifying tail latency due to a {\em state explosion} problem, because states of the corresponding queuing model must encapsulate not only a snapshot of the current system including chunk placement and queued requests but also past history of how chunk requests have been processed by individual nodes. Later, mean latency bounds for arbitrary service time distribution and heterogeneous files are provided in \cite{Tian_latency,Yu_TON} using order statistic analysis and a probabilistic request scheduling policy. The authors in \cite{li2016mean} used probabilistic
scheduling with uniform probabilities and exponential service
times to show improved latency performance of erasure coding
as compared to replication in the limit of large number of
servers for replication-based systems. While reducing mean latency is found to have a positive impact on pushing down the latency envelop (e.g., reducing the 90th, and 99th percentiles) \cite{Liang:2014}, quantifying and optimizing tail latency for erasure-coded storage is still an open problem.

In this paper, we propose an analytical framework to quantify tail latency in distributed storage systems that employ erasure codes to store files. This problem is challenging because (i) tail latency is significantly skewed by performance of the slowest storage nodes; (ii) a joint chunk scheduling problem needs to be solved on the fly to decide $n$-choose-$k$ chunks/servers serving each file request; and (iii) the problem is further complicated by the dependency and interference of chunk access times of different files on shared storage servers. Toward this end, we make use of probabilistic scheduling proposed in \cite{Tian_latency,Yu_TON,Yu-CCGRID,Yu-ICDCS,Yu-TCC16,Sprout}. Upon the arrival of each file request, we randomly dispatch a batch of $k$ chunk requests to $k$-out-of-$n$ storage nodes selected with some predetermined probabilities. Then, each storage node manages its local queue independently and continues processing requests in order. A file request is completed if all its chunk requests exit the system. This probabilistic scheduling policy allows us to analyze the (marginal) queuing delay distribution of each storage node and then combine the results (through Laplace Stieltjes Transform and order statistic bounds) to obtain an upper bound on tail latency in closed-form for general service time distributions. The tightest bound is obtained via an optimization over all probabilistic schedulers and all Markov bounds on tail probability.

	The proposed framework provides a mathematical crystallization of tail latency, illuminating key system design tradeoffs in erasure-coded storage.  Prior evaluation of practical systems show that the latency spread is significant even when data object sizes are in the order of megabytes \cite{Liang:2014}. To tame tail latency in erasure coded storage, we propose an optimization problem to jointly minimize the sum probability that service latency of each file exceeds a given threshold. This optimization is carried out over three dimensions: the joint placement of all files, all probabilistic schedulers, and the auxiliary variables in the tail latency bounds. We note that the probabilistic scheduler helps decrease the differentiated tail latency of the files as compared to accessing the lowest-queue servers which is important for overall tail latency of files.  Since data chunk transfer time in practical systems follows a shifted exponential distribution \cite{Yu_TON,CS14,Yu-TON16}, we show that under this assumption, the tail latency optimization can be formulated in closed-form as a non-convex minimization. To solve the problem, we prove that it is convex in two of the optimization variables and propose an alternating optimization algorithm, while the optimization with respect to file placement can be solved optimally using bipartite matching.
Extensive simulations shows significant reduction of tail latency for erasure-coded storage systems using the proposed optimization over five different baseline strategies.

The main contributions of this paper are summarized as follows:
\begin{itemize}[leftmargin=*]
\item We propose an analytical framework to quantify tail latency for arbitrary erasure-coded storage systems and service time distributions. 

\item When chunk transfer time follows shifted-exponential distribution, we formulate a weighted latency tail probability optimization that simultaneous minimizes tail latency of all files by optimizing the system over three dimensions: chunk placement, auxiliary variables, and the scheduling policy.

\item We develop an alternating optimization algorithm which is shown to converge to a local optima for the tail latency optimization. Two of the subproblems are convex, while bipartite matching is used to solve the third subproblem. 
Significant tail latency reduction up to a few orders of magnitude is validated through numerical results.

\end{itemize}

The rest of the paper is organized as follows. Section II gives the system model for the problem. Section III finds an upper bound on tail latency through probabilistic scheduling and Laplace Stieltjes transform of the waiting time from each server. Section IV formulates and solves the tail latency optimization. Section V presents our numerical results and Section VI concludes the paper. 
\section{System Model}

We consider a data center consisting of $m$ heterogeneous servers, denoted by $M={1,2,...,m}$, also called storage nodes. To distributively store a set of $r$ files, indexed by $i=1,2,...r$, we partition each file $i$ into $k_i$ fixed-size chunks and then encode it using an $(n_i, k_i)$ MDS erasure code to generate $n_i$ distinct chunks of the same size for file $i$. The encoded chunks are assigned to and stored on $n_i$ distinct storage nodes, represented by a set $S_i$ of storage nodes, satisfying $S_i \subseteq M$ and $n_i=\vert S_i \vert$.  The use of $(n_i, k_i)$ MDS erasure code allows the file to be reconstructed from any subset of $k_i$-out-of-$n_i$ chunks, whereas it also introduces a redundancy factor of $n_i/k_i$. Thus, upon the arrival of each file request, $k_i$ distinct chunks are selected by a scheduler and retrieved to reconstruct the desired file. Figure~\ref{sys} illustrates a distributed storage system with 7 nodes. Three files are stored in the system using $(6,4)$, $(5,3)$, and $(3,2)$ erasure codes, respectively. File requests arriving at the system are jointly scheduled to access $k_i$-out-of-$n_i$ distinct chunks. Prior work analyzing erasure-coded storage systems mainly focus on mean latency, including two approaches using queuing-theoretic analysis in \cite{ISIT:12,lee2017mds} and fork-join queue analysis in \cite{Makowski:89,JK13,LK13,Joshi:13,CS14,KumarTC14}.

\begin{figure*}
	\begin{center}
		\includegraphics[width=0.75\textwidth]{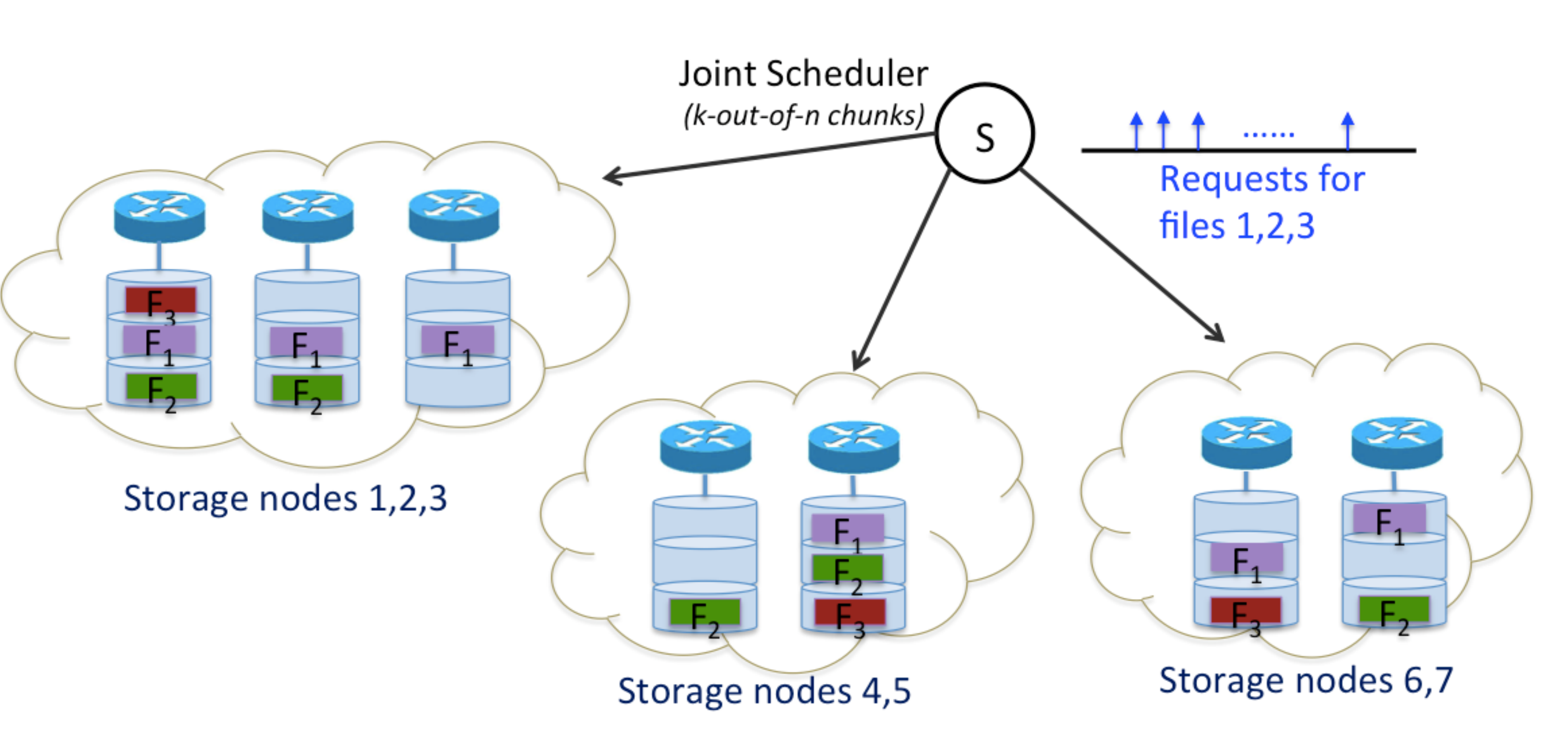}
		\caption{An illustration of a distributed storage system equipped with $7$ nodes and storing $3$ files using different erasure codes.}
		\label{sys}
	\end{center}
\end{figure*}

However, both approaches fall short of quantifying tail latency, because states of the corresponding queuing model must encapsulate not only a snapshot of the current system including chunk placement and queued requests, but also past history of how chunk requests have been processed by individual nodes. This leads to a {\em state explosion} problem as practical storage systems usually handle a large number of files and nodes \cite{Yu_TON}. To the best of our knowledge, quantifying tail latency for erasure-coded storage system is still an open problem because of challenges in joint request scheduling (i.e., selecting $n$-choose-$k$ chunks for each request on the fly with the goal of minimizing tail latency) as well as the dependency of straggling fragments on hot storage nodes. Consider the erasure-coded storage system storing 3 files, as shown in Figure~\ref{sys}. It is easy to see that a simple scheduling policy that accesses available chunks with equal probability lead to high tail latency, which is determined
by hot storage nodes (i.e., nodes 1 and 5 in this case) with slowest performance. Yet, a policy that load-balances the number of requests processed by each server does not necessarily optimize tail latency of all files, which employ different erasure codes resulting in different impact on service latency. Assuming that chunk transfer time from all storage nodes have the same distribution, file 1 using $(6,4)$ code could still have much higher tail latency than file 3 that uses $(3,2)$ code, since its service time of each file request is determined by the slowest of the 4 selected chunks (rather than 2 selected chunks).

In this paper, we use the Probabilistic Scheduling from \cite{Yu_TON,Tian_latency}, which is a probabilistic scheduling policy: 1) dispatches each batch of chunk requests (corresponding to the same file request) to appropriate a set of nodes (denoted by set $A_i$ of servers for file $i$) with predetermined probabilities ($P(A_i) $ for set $A_i$ and file $i$); 2) each node buffers requests in a local queue and processes in order. The authors of \cite{Yu_TON,Tian_latency} have shown that a probabilistic scheduling policy with feasible probabilities $\{P(A_i): \forall i,A_i\}$ exists if and only if there exists conditional probabilities $\pi_{i,j} \in [0,1],\forall i,j$ satisfying $$
\sum_{j=1}^m\pi_{i,j}=k_i \quad \forall i \quad \text{ and } \quad \pi_{i,j}=0\ \text{ if } \ j \notin S_i. $$ Consider the example shown in Figure~\ref{sys}. Under probabilistic scheduling, upon the arrival of a file 1 request, we randomly select $k_1=4$ nodes (from $\{1,2,3,5,6,7\}$) with available file chunks with respect to known probabilities $\{\pi_{1,j}, \ \forall j\}$ and dispatch a chunk request to each selected storage node. 
Then, each storage node manages its local queue independently and continues processing requests in order. The file request is completed if all its chunk requests are processed by individual nodes. While this probabilistic scheduling is used to provide an upper bound on mean service time in \cite{Yu_TON,Tian_latency}, we extend the policy and provide an analytical model for tail latency, enabling a novel tail latency optimization.

We will now describe a queueing model of the distributed storage system. We assume that the arrival of client requests for each file $i$ form an independent Poisson process with a known rate $\lambda_i$. We consider chunk service time $X_j$ of node $j$ with arbitrary distributions, whose statistics can be obtained inferred from existing work on network delay \cite{WK,AY11} and file-size distribution \cite{Corbett:2004,Calder:2011}. Under MDS codes, each file $i$ can be retrieved from any $k_i$ distinct nodes that store the file chunks. We model this by treating each file request as a batch of $k_i$ chunk requests, so that a file request is served when all $k_i$ chunk requests in the batch are processed by distinct storage nodes. Even though the choice of codes for different files can be different, we assume that the chunk size is the same for all files.  All requests are buffered in a common queue of infinite capacity.


\begin{table}[ht]
\centering
\caption{Main notations Used in this paper}
\label{my-label}
\resizebox{.48\textwidth}{!}{  
	\begin{tabular}{cc}
\hline \hline
Symbol        & Meaning                                               \\  \hline
$r$             & Number of files in system by $i=1,2,...,r $             \\
$m$             & Number of storage  nodes                        \\
($n_i, k_i$)        & Erasure code parameters for file $i$                        \\
$\lambda_i$    & Arrival rate of file $i$                                \\
$\pi_{ij}$   & Probability of retrieving chunk of file $i$ from node $j$  \\
$L_i$          & Latency of retrieving file $i$                          \\
$x$             & Parameter indexing latency tail probability                              \\
$Q_j$          & Sojourn Time of node $j$                                \\
$X_j$          & Chunk Service Time of node $j$                                \\
$M_j(t)$          & Moment Generating Function for the service time of node $j$                                \\
$\mu_j$        & Mean service time of node $j$                           \\
$\Lambda_j$    & Arrival rate on node $j$                                \\
$\rho_j$       & Request intensity at node $j$                           \\
$S_i$	&Set of storage nodes having chunks from file $i$\\
$A_i $         & Set of nodes used to provide chunks from file $i$                \\
$(\alpha_j,\beta_j)$ & Parameters of Shifted Exponential distributed\\& service time at node $j$\\
$\omega_i$     & weight of file $i$                                      \\ \hline \hline
\end{tabular}}
\end{table}
 \section{Bounds on Tail Latency}

We first quantify tail latency for erasure-coded storage systems with arbitrary service time distribution (i.e., arbitrary known distribution of $X_j$). Let ${\bf Q}_j$ be the (random) time the chunk request spends in node $j$ (sojourn time). Under probabilistic scheduling, the service time (denoted by $L_i$) of a file-$i$ request is determined by the maximum chunk service time at a randomly selected set $A_i$ of storage nodes.

Under probabilistic scheduling, the arrival of chunk requests at node $j$ form a Poisson Process with rate $\Lambda_j = \sum_i \lambda_i \pi_{ij}$. Let $M_j(t) = {\mathbb E}[e^{t X_j}]$ be the moment generating function of service time of processing a single chunk at server $j$. Then, the Laplace Stieltjes Transform of ${\bf Q}_j$  is given, using Pollaczek-Khinchine formula, as 
\begin{equation}
{\mathbb E}[e^{-s Q_j}] = \frac{(1-\rho_j)s M_j(-s)}{s-\Lambda_j(1-M_j(-s))},\label{polla}
\end{equation}
where $\rho_j  =  \Lambda_j{\mathbb E}[X_j]$ is the request intensity at node $j$, and $M_j(t) = {\mathbb E}[e^{t X_j}]$ is the moment generating function of $X_j$ \cite{kleinrock1976queueing}. Further, let the latency of the file $i$ be denoted as $L_i$ using probabilistic scheduling. The {\em latency tail probability} of file $i$ is defined as the probability that $L_i$ is greater than or equal to $x$, for a given $x$. For given weight $w_i$ for file $i$, this paper wishes to minimize $\sum_i w_i \Pr(L_i\ge x)$. Since finding $\Pr(L_i\ge x)$ in closed form is hard for general service time distribution, we further use an upper bound on this and use that instead of $\Pr(L_i\ge x)$ in the objective. 

The following theorem gives an upper bound on the latency tail probability of a file.

 \begin{theorem}
 The latency tail probability for file $i$, $\Pr(L_i\ge x)$ using probabilistic scheduling is bounded by
 \begin{equation}
  \Pr(L_i\ge x)\le \sum_{j}\frac{\pi_{ij}}{e^{t_j x}} \frac{(1-\rho_j)t_j M_j(t_j)}{t_j - \Lambda_j (M_j(t_j)-1)},
\end{equation}
for any $t_j> 0$, $\rho_j = \Lambda_j{\mathbb E}[X_j]$, satisfying $M_j(t_j)<\infty$ and $\Lambda_j (M_j(t_j)-1)<t_j$.
 \end{theorem}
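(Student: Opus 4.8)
The plan is to bound $\Pr(L_i\ge x)$ by combining a union bound over the randomly chosen set of servers with an exponential Markov (Chernoff) bound on each server's sojourn time, and then to substitute the Pollaczek--Khinchine transform (\ref{polla}) to obtain the closed form. Recall that under probabilistic scheduling a file-$i$ request is dispatched as $k_i$ chunk requests to a random set $A_i$ of $k_i$ nodes, so that $L_i=\max_{j\in A_i}{\bf Q}_j$; hence the event $\{L_i\ge x\}$ is exactly $\bigcup_{j\in A_i}\{{\bf Q}_j\ge x\}$.

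First I would condition on the routing decision $A_i$ and apply the union bound, obtaining $\Pr(L_i\ge x\mid A_i)\le \sum_{j\in A_i}\Pr({\bf Q}_j\ge x\mid A_i)$. The structural property of probabilistic scheduling that I would invoke is twofold: (i) the marginal probability that node $j$ is selected for file $i$ equals $\pi_{ij}$, i.e., the indicator of the event $\{j\in A_i\}$ has mean $\pi_{ij}$ and $\sum_j\pi_{ij}=k_i$; and (ii) because the per-node arrival streams are obtained by Poisson thinning and a tagged file-$i$ chunk routed to node $j$ sees node $j$ in steady state (Poisson Arrivals See Time Averages), the conditional tail $\Pr({\bf Q}_j\ge x\mid j\in A_i)$ coincides with the stationary sojourn-time tail $\Pr({\bf Q}_j\ge x)$ whose Laplace--Stieltjes transform is (\ref{polla}). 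Taking the expectation over $A_i$ and using linearity of expectation then gives $\Pr(L_i\ge x)\le \sum_{j=1}^{m}\pi_{ij}\,\Pr({\bf Q}_j\ge x)$.

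Next I would bound each per-node tail by the exponential Markov inequality: for any $t_j>0$, $\Pr({\bf Q}_j\ge x)=\Pr(e^{t_j{\bf Q}_j}\ge e^{t_jx})\le e^{-t_jx}\,{\mathbb E}[e^{t_j{\bf Q}_j}]$. The moment generating function ${\mathbb E}[e^{t_j{\bf Q}_j}]$ follows from (\ref{polla}) by setting $s=-t_j$ (so that $M_j(-s)=M_j(t_j)$), which after clearing signs yields ${\mathbb E}[e^{t_j{\bf Q}_j}]=\frac{(1-\rho_j)t_jM_j(t_j)}{t_j-\Lambda_j(M_j(t_j)-1)}$; this quantity is finite and positive exactly under the stated hypotheses $M_j(t_j)<\infty$ and $\Lambda_j(M_j(t_j)-1)<t_j$. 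Chaining the union bound with this per-node estimate gives $\Pr(L_i\ge x)\le\sum_j \pi_{ij}e^{-t_jx}\frac{(1-\rho_j)t_jM_j(t_j)}{t_j-\Lambda_j(M_j(t_j)-1)}$, which is the claimed bound; and since each $t_j$ was an arbitrary admissible parameter, the inequality in fact holds simultaneously for every such $(t_1,\dots,t_m)$, the fact later exploited when the bound is optimized over these auxiliary variables.

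The step I expect to be the main obstacle is the rigorous justification of point (ii): that conditioning on a tagged file-$i$ request being routed through node $j$ leaves the distribution of ${\bf Q}_j$ unchanged, so that the stationary Pollaczek--Khinchine expression may legitimately be plugged into the conditional union bound. This relies on the Poisson arrival model and on the independence built into the probabilistic scheduling policy — the routing of the tagged request is independent of the queue state it encounters, which by PASTA is the time-stationary state. Once that is granted, everything else — the union bound for the maximum, the Chernoff bound, and the algebraic substitution $s\mapsto -t_j$ in (\ref{polla}) — is routine; note in particular that no independence among the ${\bf Q}_j$'s for $j\in A_i$ is required, since the union bound is insensitive to the correlation introduced by the common triggering request.
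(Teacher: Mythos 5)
Your proof follows essentially the same route as the paper's: a union bound over the randomly selected set $A_i$ giving $\Pr(L_i\ge x)\le\sum_j\pi_{ij}\Pr(Q_j\ge x)$, then the exponential Markov bound $\Pr(Q_j\ge x)\le e^{-t_jx}{\mathbb E}[e^{t_jQ_j}]$, and finally the substitution $s=-t_j$ in the Pollaczek--Khinchine transform \eqref{polla} under the stated finiteness conditions. If anything, you are more explicit than the paper about the one delicate step --- that conditioning on $j\in A_i$ leaves the marginal law of $Q_j$ unchanged --- which the paper passes over silently when moving from ${\mathbb E}_{A_i,Q_j}\sum_{j\in A_i}1_{(Q_j\ge x)}$ to $\sum_j\pi_{ij}\Pr(Q_j\ge x)$.
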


\begin{proof}
We consider an upper bound on latency tail probability using probabilistic scheduling as follows.
 \begin{eqnarray}
 \Pr(L_i\ge x) &\stackrel{\text{(a)}}{=}& \Pr_{A_i, Q_j} (\max_{j\in A_i} Q_j \ge x) \\
  &=& \Pr_{A_i, Q_j} (Q_j \ge x \text{ for some } j \in A_i) \\
  &=& {\mathbb E}_{A_i,Q_j} [\max_{j\in A_i} {1}_{(Q_j\ge x)}]\\
  &\le&  {\mathbb E}_{A_i,Q_j}\sum_{j\in A_i} [ {1 }_{(Q_j\ge x)}]\\
  &=&  {\mathbb E}_{A_i}\sum_{j\in A_i}[ \Pr{(Q_j\ge x)}]\\
  &=& \sum_{j}\pi_{ij} [ \Pr{(Q_j\ge x)}],
 \end{eqnarray}
where (a) follows since for probabilistic scheduling, the time to retrieve the file is the maximum of the time of retrieving all the chunks from $A_i$.

Using Markov Lemma, we have $ \Pr{(Q_j\ge x)} \le \frac{{\mathbb E}[e^{t_j Q_j}]}{e^{t_j x}}$. In order to obtain ${\mathbb E}[e^{t_j Q_j}]$, we use Pollaczek-Khinchine formula for Laplace Stieltjes Transform of $Q_j$ in \eqref{polla} and use $s=-t_j$. However, the expression is finite only when $\Lambda_j (M_j(t_j)-1)<t_j$. This proves the result as in the statement of the Theorem.
\end{proof}

In some cases, the moment generating function may not exist, which means that the condition $\Lambda_j (M_j(t_j)-1)<t_j$ may not be satisfied for any $t_j>0$. In such cases, we will use the Laplace Stieltjes Transform directly to give another upper bound in the next theorem.

\begin{theorem}
 The latency tail probability for file $i$, $\Pr(L_i\ge x)$ is bounded by
 \begin{equation}
  \Pr(L_i\ge x)\le \sum_{j}\frac{\pi_{ij}(1 - {\mathbb E}[e^{-s_j Q_j}])}{1  - e^{-s_j x}},
\end{equation}
for any $s_j> 0$, where $\rho_j = \Lambda_j{\mathbb E}[X_j]$, ${\mathbb E}[e^{-s Q_j}] = \frac{(1-\rho_j)s L_j(s)}{s-\Lambda_j(1-L_j(s))}$, and $L_j(s) = E[e^{-s X_j}]$.
 \end{theorem}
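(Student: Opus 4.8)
The plan is to mirror the proof of Theorem~1 and change only the final concentration step. First I would reuse the probabilistic-scheduling decomposition verbatim: since under probabilistic scheduling $L_i = \max_{j\in A_i} Q_j$, the same chain of (in)equalities as in Theorem~1 --- rewriting the max of indicators, applying the union bound over $j\in A_i$, and then taking the expectation over the random set $A_i$ --- yields
\[
\Pr(L_i\ge x)\ \le\ \sum_{j}\pi_{ij}\,\Pr(Q_j\ge x),
\]
where $\pi_{ij}=\mathbb{E}_{A_i}[\mathbf{1}_{(j\in A_i)}]$ is the marginal probability that node $j$ serves a chunk of file $i$. This step uses nothing about the service-time distribution, so it carries over unchanged; the only new work is a tail bound on $Q_j$.

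The new ingredient is a bound on $\Pr(Q_j\ge x)$ expressed through the Laplace--Stieltjes transform $\mathbb{E}[e^{-s_j Q_j}]$, which is \emph{always} finite for $s_j>0$ because $Q_j\ge 0$ --- in contrast to the moment generating function $M_j(t_j)$, whose finiteness Theorem~1 had to assume. The key step is the pointwise inequality
\[
\mathbf{1}_{(Q_j\ge x)}\ \le\ \frac{1-e^{-s_j Q_j}}{1-e^{-s_j x}},\qquad s_j>0,\ x>0,
\]
which I would verify by cases: on $\{Q_j\ge x\}$ the right-hand side is at least $1$ since $e^{-s_j Q_j}\le e^{-s_j x}$; on $\{Q_j<x\}$ the right-hand side is nonnegative since $Q_j\ge 0$ forces $1-e^{-s_j Q_j}\ge 0$ while the denominator is strictly positive. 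Taking expectations gives $\Pr(Q_j\ge x)\le (1-\mathbb{E}[e^{-s_j Q_j}])/(1-e^{-s_j x})$, a Markov-type estimate with the tailored test function $q\mapsto (1-e^{-s_j q})/(1-e^{-s_j x})$.

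Finally I would substitute the Pollaczek--Khinchine expression~\eqref{polla} for $\mathbb{E}[e^{-s_j Q_j}]$, namely $\mathbb{E}[e^{-s Q_j}] = (1-\rho_j)sL_j(s)/(s-\Lambda_j(1-L_j(s)))$ with $L_j(s)=\mathbb{E}[e^{-s X_j}]$ and $\rho_j=\Lambda_j\mathbb{E}[X_j]$, into the per-node bound and sum against $\pi_{ij}$, which reproduces the claimed inequality. I do not expect a genuine obstacle here: the argument is a union bound followed by a one-line estimate. The only point requiring care is that the test function must dominate the indicator for \emph{all} $q\ge 0$ (not merely near $q=x$), which the two-case check guarantees, and that $s_j>0$ together with $x>0$ keeps $1-e^{-s_j x}>0$ so that no inequality sign flips when clearing denominators.
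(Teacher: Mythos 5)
Your proposal is correct and is exactly the argument the paper intends: the paper's proof is a one-line remark that Theorem~1's derivation is repeated with the Markov-type bound applied to the increasing function $q\mapsto 1-e^{-s_j q}$ (i.e., $\Pr(Q_j\ge x)\le \mathbb{E}[1-e^{-s_jQ_j}]/(1-e^{-s_jx})$) in place of $e^{t_j q}$, and your two-case verification of the pointwise domination plus the unchanged union-bound decomposition fills in precisely those details. No gap.
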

 \begin{proof}
 This result is a variant of Theorem 1, where Markov Lemma is used using Laplace Stieljes Transform of the Queue Waiting Time rather than the moment generating function.
 \end{proof}

We next consider the case when the service time distribution is a shifted exponential distribution. This choice is motivated by the Tahoe experiments \cite{Yu_TON} and Amazon S3 experiments \cite{CS14}.  Let the service time distribution from server $j$ has probability density function $f_{X_j}(x)$, given as
 \begin{equation}
   f_{X_j}(x)= \left\{\begin{array}{lr}
   \alpha_j e^{-\alpha_j(x-\beta_j)}, & \text{ for }  x\ge \beta_j\\
   0,  & \text{ for }  x< \beta_j
   \end{array}.\right.
 \end{equation}
 Exponential distribution is a special case with $\beta_j=0$. The Moment Generating Function is given as
 \begin{equation}
 M_j(t) = \frac{\alpha_j}{\alpha_j-t}e^{\beta_j t} \quad \text{ for } t< \alpha_j.
 \end{equation}
 Using these expressions, we have the following result.

 \begin{corollary}
  When the service time distributions of servers are given by shifted exponential distribution, the latency tail probability for file $i$, $\Pr(L_i\ge x)$, is bounded by
  \begin{equation}
  \Pr(L_i\ge x)\le \sum_{j}\frac{\pi_{ij}}{e^{t_j x}} \frac{(1-\rho_j)t_j M_j(t_j)}{t_j - \Lambda_j (M_j(t_j)-1)},
\end{equation}
for any $t_j>0$, $\rho_j = \frac{\Lambda_j}{\alpha_j}+\Lambda_j \beta_j$, $\rho_j<1$,  and $t_j(t_j - \alpha_j + \Lambda_j) +\Lambda_j \alpha_j (e^{\beta_j t_j}-1)<0$.
 \end{corollary}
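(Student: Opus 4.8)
The plan is to derive this corollary directly from Theorem~1 by specializing the generic moment generating function $M_j(t)$ to that of the shifted exponential distribution. First I would recall from the discussion preceding the corollary that $M_j(t) = \frac{\alpha_j}{\alpha_j - t}e^{\beta_j t}$, which is finite precisely when $t < \alpha_j$; this immediately identifies the condition ``$M_j(t_j) < \infty$'' appearing in Theorem~1 with the requirement $t_j < \alpha_j$. Substituting this $M_j(t_j)$ verbatim into the bound of Theorem~1 reproduces the displayed inequality, so the only real work is translating the two side conditions of Theorem~1 --- namely $\rho_j = \Lambda_j \mathbb{E}[X_j]$ together with $\Lambda_j(M_j(t_j) - 1) < t_j$ --- into the stated closed forms.

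For the request intensity, I would compute $\mathbb{E}[X_j] = \beta_j + 1/\alpha_j$ for a shifted exponential, so that $\rho_j = \Lambda_j/\alpha_j + \Lambda_j\beta_j$, and note that stability of the underlying queue (already implicit in the Pollaczek--Khinchine formula used in Theorem~1) requires $\rho_j < 1$. For the transform condition, I would start from $\Lambda_j\!\left(\frac{\alpha_j}{\alpha_j - t_j}e^{\beta_j t_j} - 1\right) < t_j$, multiply through by $\alpha_j - t_j$ --- which is strictly positive because $t_j < \alpha_j$, so the direction of the inequality is preserved --- and collect terms. This rearranges to $t_j(t_j - \alpha_j + \Lambda_j) + \Lambda_j\alpha_j(e^{\beta_j t_j} - 1) < 0$, which is exactly the condition stated in the corollary.

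One point I would make explicit is that the separate hypothesis $t_j < \alpha_j$ need not be listed, since it is already forced by the stated polynomial-exponential inequality: if $t_j \ge \alpha_j$, then $t_j - \alpha_j + \Lambda_j \ge \Lambda_j > 0$ and $e^{\beta_j t_j} - 1 \ge 0$ (as $\beta_j \ge 0$), so the left-hand side is strictly positive --- a contradiction. Hence the set $\{t_j : t_j(t_j-\alpha_j+\Lambda_j)+\Lambda_j\alpha_j(e^{\beta_j t_j}-1)<0\}$ is automatically contained in $(0,\alpha_j)$, and on it every quantity appearing in the bound is finite and positive, so the specialization of Theorem~1 is legitimate.

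Since every step is a direct substitution or an elementary algebraic rearrangement, there is no genuine obstacle here; the only thing requiring care is tracking the sign of $\alpha_j - t_j$ when clearing the denominator, and confirming --- as above --- that the finiteness condition $t_j < \alpha_j$ is subsumed by the final inequality rather than being an independent assumption.
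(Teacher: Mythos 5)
Your proposal is correct and follows essentially the same route as the paper's own (very terse) proof: substitute the shifted-exponential moment generating function into Theorem~1, translate the condition $\Lambda_j(M_j(t_j)-1)<t_j$ into the stated polynomial-exponential inequality, and observe that this inequality already forces $t_j<\alpha_j$, so the finiteness of $M_j(t_j)$ need not be assumed separately. Your version just spells out the algebra and the sign of $\alpha_j-t_j$ more explicitly than the paper does.
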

 \begin{proof}
 We note that the condition $\Lambda_j (M_j(t_j)-1)<t_j$ reduces to $t_j(t_j - \alpha_j + \Lambda_j) +\Lambda_j \alpha_j (e^{\beta_j t_j}-1)<0$. Since $t_j\ge \alpha_j$ will not satisfy $t_j(t_j - \alpha_j + \Lambda_j) +\Lambda_j \alpha_j (e^{\beta_j t_j}-1)<0$, the conditions in the statement of the Corollary implies $t_j<\alpha_j$ where the above moment generating function expression is used. 
 \end{proof}

Since exponential distribution is a special case of the shifted exponential distribution, we have the following corollary. 

 \begin{corollary}
  When the service time distributions of servers are given by exponential distribution, the latency tail probability for file $i$, $\Pr(L_i\ge x)$, is bounded by
  \begin{equation}
  \Pr(L_i\ge x)\le \sum_{j}\frac{\pi_{ij}}{e^{t_j x}} \frac{(1-\rho_j)t_j M_j(t_j)}{t_j - \Lambda_j (M_j(t_j)-1)},
\end{equation}
for any $t_j>0$, $\rho_j = \frac{\Lambda_j}{\alpha_j}$, $\rho_j<1$,  $t_j <\alpha_j(1-\rho_j)$
 \end{corollary}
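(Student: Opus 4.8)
The plan is to obtain this corollary as an immediate specialization of the preceding corollary for shifted exponential service times, since the exponential distribution is precisely the shifted exponential distribution with shift parameter $\beta_j = 0$. First I would set $\beta_j = 0$ everywhere: the density reduces to $f_{X_j}(x) = \alpha_j e^{-\alpha_j x}$ for $x \ge 0$, the moment generating function collapses to $M_j(t) = \alpha_j/(\alpha_j - t)$ for $t < \alpha_j$, and the intensity $\rho_j = \Lambda_j/\alpha_j + \Lambda_j \beta_j$ simplifies to $\rho_j = \Lambda_j/\alpha_j$. The form of the bound on the right-hand side is then literally unchanged; only the feasibility conditions must be rewritten.

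Next I would carry $\beta_j = 0$ through the feasibility condition of the shifted-exponential corollary, namely $t_j(t_j - \alpha_j + \Lambda_j) + \Lambda_j \alpha_j (e^{\beta_j t_j} - 1) < 0$. Since $e^{\beta_j t_j} - 1 = e^0 - 1 = 0$, the condition reduces to $t_j(t_j - \alpha_j + \Lambda_j) < 0$. Because we require $t_j > 0$, dividing through by the strictly positive factor $t_j$ preserves the inequality, giving $t_j - \alpha_j + \Lambda_j < 0$, i.e. $t_j < \alpha_j - \Lambda_j = \alpha_j(1 - \rho_j)$, which is exactly the condition claimed. I would also note that this last inequality can hold for some $t_j > 0$ only if $\alpha_j(1-\rho_j) > 0$, which forces $\rho_j < 1$; hence the two stated conditions are mutually consistent, and $t_j < \alpha_j(1-\rho_j) \le \alpha_j$ in particular guarantees $M_j(t_j) < \infty$ as required by Theorem 1.

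I do not expect a real obstacle here; the statement is essentially a bookkeeping corollary. The only point demanding minor care is the reduction of the quadratic feasibility inequality: one must observe that $t_j(t_j - \alpha_j + \Lambda_j) < 0$ with $t_j > 0$ is \emph{equivalent} to (not merely implied by) the linear condition $t_j < \alpha_j(1-\rho_j)$, which follows because the quadratic factors as a product of the positive quantity $t_j$ and the linear term $t_j - \alpha_j + \Lambda_j$. With that equivalence in hand, the claim is identical to the shifted-exponential corollary evaluated at $\beta_j = 0$, and the proof is complete.
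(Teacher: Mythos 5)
Your proposal is correct and follows exactly the route the paper intends: the paper gives no explicit proof, simply noting that the exponential case is the $\beta_j=0$ specialization of the shifted-exponential corollary, and your reduction of the feasibility condition to $t_j < \alpha_j(1-\rho_j)$ is the right (and only nontrivial) bookkeeping step. Nothing is missing.
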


 \section{Optimizing Weighted  Latency Tail Probability}

Now we formulate a joint latency tail probability optimization for multiple, heterogeneous files. Since the latency tail probability  is given by $\Pr(L_i \ge x)$ for $x> \max_j \beta_j$, we consider an optimization that minimizes {\em weighted latency tail probability} of all files, defined by
\begin{eqnarray}
\sum_i \omega_i\Pr(L_i \ge x),
\end{eqnarray}
where $\omega_i = \frac{\lambda_i}{\sum_i \lambda_i}$ is a positive weight assigned to file $i$ so that the files with larger arrival rates are weighted higher, and latency tail probability of file-$i$ service time is $\Pr(L_i \ge x)$. We consider the proposed bound on the latency tail probability to have the objective function as 
\begin{eqnarray}
 \sum_i \lambda_i\left (\sum_{j}\frac{\pi_{ij}}{e^{t_j x}} \frac{(1-\rho_j)t_j M_j(t_j)}{t_j - \Lambda_j (M_j(t_j)-1)} \right). 
\end{eqnarray}
%
%
%

Let  $\bm \pi = \{\pi_{i,j} \forall i,j\}$, ${\mathbf t}=\{t_j \forall j\}$, and ${\bm{  \mathcal{S} }} = \{\mathcal{S}_{i} \forall i\}$. We consider the following Weighted  Latency Tail Probability (WLTP) optimization problem over the scheduling probabilities $\bm \pi$, the placement of files ${\bm{  \mathcal{S} }}$,  and auxiliary parameters ${\mathbf t}$, i.e.,
\begin{eqnarray}
&\min&  \sum_{j}\Lambda_j{e^{-t_j x}} \frac{(1-\rho_j)t_j M_j(t_j)}{t_j - \Lambda_j (M_j(t_j)-1)}  \label{objeq} \\
&s.t.&\Lambda_j=\sum_i \lambda_{i}\pi_{ij}\label{formf}\\
&& M_j(t) = \frac{\alpha_j}{\alpha_j-t}e^{\beta_j t}\label{mjt}\\
&& \rho_j = \frac{\Lambda_j}{\alpha_j}+\Lambda_j \beta_j \label{opt_1}\\
&&\sum_j\pi_{i,j} =k_i \label{opt_3}\\
&& \pi_{i,j}=0, j\notin \mathcal{S}_{i} \label{rhocon}\\
&&\pi_{i,j}\in [0,1] \label{opt_5}\\
&&\left|\mathcal{S}_{i}\right|=n_{i}\,,\,\,\left|\mathcal{A}_{i}\right|=k_{i}\,\,\,\text{\ensuremath{\forall}i} \label{plc}\\
&& t_j\ge 0\label{tjo}\\
&& \!\!\!\! t_j(t_j - \alpha_j + \Lambda_j) +\Lambda_j \alpha_j (e^{\beta_j t_j}-1)<0 \label{forml}\\
&var.& \bm \pi, {\mathbf t}, {\bm{  \mathcal{S} }}
\end{eqnarray}

Here, Constraint (\ref{formf}) gives the aggregate arrival rate $\Lambda_j$ for each node under give scheduling probabilities $\pi_{i,j}$ and arrival rates $\lambda_i$, Constraint (\ref{mjt}) defines moment generating function with respect to parameter $t_j$, Constraint (\ref{opt_1}) defines the traffic intensity of the servers, Constraints (\ref{opt_3}-\ref{opt_5}) guarantee that the scheduling probabilities are feasible, and finally, the moment generating function exists due to the technical constraint in (\ref{forml}). If  (\ref{forml}) is satisfied, $\rho_j<1$ holds too thus ensuring the stability of the storage system (i.e., queue length does not blow up to infinity under given arrival rates and scheduling probabilities). We note that $t_j>0$ can be equivalently converted  to $t_j\ge 0$ (and thus done in \eqref{tjo}) since $t_j=0$ do not satisfy $t_j(t_j - \alpha_j + \Lambda_j) +\Lambda_j \alpha_j (e^{\beta_j t_j}-1)<0$ and has already been accounted for.  We note that the the optimization over $\bm \pi$ helps decrease the weighted tail latency probability and gives significant flexibility over choosing the lowest-queue servers for accessing the files. The placement of the files ${\bm{  \mathcal{S} }}$ helps separate the highly accessed files on different servers thus reducing the objective. Finally, the optimization over the auxiliary variables  ${\mathbf t}$  gives a tighter bound on the weighted latency tail probability. 


\begin{remark}
The proposed WLTP optimization is non-convex, since Constraint (\ref{forml}) is non-convex in ($\bm \pi$, ${\mathbf t}$). Further, the content placement ${\bm{  \mathcal{S} }}$ has integer constraints. 
\end{remark}

To develop an algorithmic solution, we prove that the problem is convex individually with respect to  the optimization variables  ${\mathbf t}$ and ${\bm \pi}$, when the other variables are fixed. This result allows us to propose an alternating optimization algorithm for the problem.   
The next result shows the the problem is convex  in ${\mathbf t} = (t_1, t_2, \cdots, t_m)$.

\begin{theorem} 
The objective function, $\sum_{j}\frac{\Lambda_j}{e^{t_j x}} \frac{(1-\rho_j)t_j M_j(t_j)}{t_j - \Lambda_j (M_j(t_j)-1)}$ is convex in ${\mathbf t} = (t_1, t_2, \cdots, t_m)$ in the region where the constraints in \eqref{formf}-\eqref{forml} are satisfied.
\end{theorem}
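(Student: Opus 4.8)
\emph{Proof idea.} The crucial structural observation is that the objective is \emph{separable}: it is a sum over $j$ of terms
$g_j(t_j) = \Lambda_j\, e^{-t_j x}\cdot \frac{(1-\rho_j)t_j M_j(t_j)}{t_j - \Lambda_j(M_j(t_j)-1)}$, and with $\bm\pi$ and $\bm{\mathcal S}$ held fixed the quantities $\Lambda_j=\sum_i\lambda_i\pi_{ij}$ and $\rho_j$ are constants that do not depend on $\mathbf t$. Hence it suffices to show that each $g_j$ is convex in the single variable $t_j$ on the interval carved out by \eqref{tjo}--\eqref{forml}, since a sum of convex functions is convex. The plan is to avoid a brute-force second-derivative computation by recognizing what the second factor of $g_j$ actually is.

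Comparing with the Pollaczek--Khinchine expression \eqref{polla}, substituting $s=-t_j$ (exactly as in the proof of Theorem~1) shows that $\frac{(1-\rho_j)t_j M_j(t_j)}{t_j-\Lambda_j(M_j(t_j)-1)} = {\mathbb E}[e^{t_j Q_j}]$, i.e.\ the second factor of $g_j$ is precisely the moment generating function of the sojourn time $Q_j$ evaluated at $t_j$. Under the stability condition (which, as already noted, is implied by \eqref{forml}) $Q_j$ is a genuine nonnegative random variable, and constraint \eqref{forml} is exactly the statement that this MGF is finite at $t_j$. Now I would invoke two standard facts. First, a moment generating function $t\mapsto {\mathbb E}[e^{tQ_j}]$ is log-convex on the interval where it is finite: by H\"older's inequality, writing $e^{(\theta t_1+(1-\theta)t_2)Q_j}=(e^{t_1 Q_j})^{\theta}(e^{t_2 Q_j})^{1-\theta}$ gives ${\mathbb E}[e^{(\theta t_1+(1-\theta)t_2)Q_j}]\le {\mathbb E}[e^{t_1 Q_j}]^{\theta}{\mathbb E}[e^{t_2 Q_j}]^{1-\theta}$. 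Second, $e^{-t_j x}$ is log-linear in $t_j$, hence trivially log-convex. Since a product of log-convex functions is log-convex, and every log-convex function is convex, $g_j$ is convex in $t_j$; the nonnegative constant factor $\Lambda_j$ is harmless (and if $\Lambda_j=0$ the term simply vanishes). Summing over $j$ yields the theorem.

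I expect the only real obstacle to be the bookkeeping around the domain: one must confirm that the feasible region described by \eqref{formf}--\eqref{forml} is contained in the interval on which ${\mathbb E}[e^{t_j Q_j}]$ is finite, so that log-convexity of the MGF applies there --- in particular that $0<t_j<\alpha_j$ and $t_j-\Lambda_j(M_j(t_j)-1)>0$ throughout, which Corollary~1 already establishes under \eqref{forml}. A purely computational alternative is to substitute $M_j(t_j)=\frac{\alpha_j}{\alpha_j-t_j}e^{\beta_j t_j}$ into $g_j$, clear denominators, and verify $g_j''(t_j)\ge 0$ directly; this is feasible but tedious, and its main difficulty is showing the resulting polynomial-times-exponential numerator is nonnegative on the feasible set --- precisely the step that the MGF log-convexity argument sidesteps.
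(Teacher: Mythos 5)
Your proof is correct, and it takes a genuinely different route from the paper's. The paper's proof substitutes the shifted-exponential MGF $M_j(t)=\frac{\alpha_j}{\alpha_j-t}e^{\beta_j t}$ into each summand, factors it as $f(t)g(t)$ with $f(t)=\alpha e^{(\beta-x)t}$ and $g(t)=1/h(t)$, where $h(t)=-t+(\alpha-\Lambda)-\Lambda\alpha\sum_{u\ge1}\beta^u t^{u-1}/u!$, and then verifies $F''(t)\ge0$ by brute force, rearranging the second derivative into a sum of manifestly nonnegative terms using $h>0$ and the sign of $h''$ on the feasible set. You instead recognize the factor $\frac{(1-\rho_j)t_jM_j(t_j)}{t_j-\Lambda_j(M_j(t_j)-1)}$ as ${\mathbb E}[e^{t_jQ_j}]$ --- precisely the identification the paper itself makes (via $s=-t_j$ in \eqref{polla}) in the proof of Theorem 1, valid exactly where \eqref{forml} holds --- and then use log-convexity of MGFs via H\"older, closure of log-convexity under multiplication by the log-linear factor $e^{-t_jx}$, and the fact that log-convex functions are convex. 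This is shorter, sidesteps the second-derivative algebra entirely, and is strictly more general: it never uses the shifted-exponential form, so it establishes convexity in $\mathbf t$ of the Theorem~1 bound for arbitrary service distributions on the region where the MGF is finite. What the paper's computation buys in return is self-containedness: it is a purely algebraic verification that does not lean on the probabilistic meaning of the Pollaczek--Khinchine expression continued to $s=-t_j$. Two small points you should make explicit in a write-up: (i) convexity "on the feasible region" presupposes that region is convex in $t_j$; this holds because $\phi(t)=t(t-\alpha+\Lambda)+\Lambda\alpha(e^{\beta t}-1)$ is strictly convex with $\phi(0)=0$ and $\phi'(0)=-\alpha(1-\rho)<0$, so $\{t>0:\phi(t)<0\}$ is an interval $(0,t^*)$ (the paper glosses over this too); and (ii) the degenerate case $\Lambda_j=0$, which you already dispose of.
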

\begin{proof}
	We note that inside the summation, the term only depends on a single value of $t_j$. Thus, it is enough to show that $ \frac{t_j e^{-t_j x}M_j(t_j)}{t_j - \Lambda_j (M_j(t_j)-1)}$ is convex with respect to $t_j$. Since there is only a single index $j$ here, we ignore this subscipt for the rest of this proof.
	
	We denote
	\begin{eqnarray}
	F(t)&=&\frac{te^{-t x}M(t)}{t- \Lambda (M(t)-1)}\\
	&=&\frac{\alpha t e^{(\beta-x)t}}{-t^2+(\alpha-\Lambda)t+\Lambda\alpha-\Lambda\alpha e^{\beta t}}\\
	&=&\frac{\alpha t e^{(\beta-x)t}}{-t^2+(\alpha-\Lambda)t-\Lambda\alpha (e^{\beta t}-1)}\\
	&=&\frac{\alpha t e^{(\beta-x)t}}{-t^2+(\alpha-\Lambda)t-\Lambda\alpha \sum_{u=1}^\infty\frac{(\beta t)^u}{u!}}\\
	&=&\frac{\alpha  e^{(\beta-x)t}}{-t+(\alpha-\Lambda)-\Lambda\alpha \sum_{u=1}^\infty\frac{(\beta)^u t^{u-1}}{u!}}
	\end{eqnarray}

Thus, $F(t)$ can be written as product of $f(t) = \alpha  e^{(\beta-x)t}$ and $g(t)= \frac{1}{h(t)}$, where $h(t) = -t+(\alpha-\Lambda)-\Lambda\alpha \sum_{u=1}^\infty\frac{(\beta)^u t^{u-1}}{u!}$. Since the constraints in \eqref{formf}-\eqref{forml} are satisfied, $h(t)> 0$. Further, all positive deriavatives of $h(t)$ are non-positive. Let $w(t) = -h'(t)$. Then, $w(t)\ge 0$, and $w'(t)\ge 0$.

Further, we have
	\begin{eqnarray}
	g(t)&=&\frac{1}{h(t)}\nonumber\\
	g'(t)&=&\frac{w(t)}{h^2(t)}\nonumber\\
	g''(t)&=&\frac{h(t)w'(t)+2w^2(t)}{h^3(t)}.
	\end{eqnarray}

	\fontsize{10.4}{12.5}\selectfont
Using these, $	F''(t)$ is given as 
\begin{eqnarray}
	&&F''(t)\nonumber\\
	&=&f''(t)g(t)+f(t)g''(t)+2f'(t)g'(t)\nonumber\\
	&=&\alpha e^{(\beta-x)t}\left(((\beta-x)^2 g(t)+g''(t)+2(\beta-x)g'(t))\right)\nonumber\\
	&=&\frac{\alpha e^{(\beta-x)t}}{h^3(t)}\left((\beta-x)^2 h^2(t)+h(t)w'(t)+2w^2(t)\right.\nonumber\\
	&&\left. +2(\beta-x)w(t)h(t)\right)\nonumber\\
	&=&\frac{\alpha e^{(\beta-x)t}}{h^3(t)} \left(2\left(\frac{(\beta-x)h(t)}{2}+w(t)\right)^2
	+h(t)w'(t) \right.\nonumber\\&&\left.+\frac{(\beta-x)^2h^2(t)}{4} \right)\nonumber\\
	&\ge& 0,
	\end{eqnarray}
	where the last step follows since $h(t)\ge 0$, and $w'(t)\ge 0$. Thus, the objective function is convex in ${\mathbf t} = (t_1, t_2, \cdots, t_m)$.
\end{proof}

	\fontsize{10.8}{13}\selectfont

The next result shows that the proposed problem is convex in  ${\bm \pi} = (\pi_{ij} \forall i=1, \cdots, r, j = 1,\cdots, m)$.

\begin{theorem}
The objective function, $\sum_{j}\frac{\Lambda_j}{e^{t_j x}} \frac{(1-\rho_j)t_j M_j(t_j)}{t_j - \Lambda_j (M_j(t_j)-1)}$  is convex in ${\bm \pi} = (\pi_{ij} \forall (i,j))$.
\end{theorem}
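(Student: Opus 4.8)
The plan is to exploit the \emph{separable} structure of the objective. For fixed auxiliary variables $\mathbf t$ and fixed placement $\bm{\mathcal S}$ (hence fixed $\alpha_j,\beta_j$, $M_j(t_j)$, and $\mathbb E[X_j]$), the term indexed by $j$ depends on $\bm\pi$ only through the scalar $\Lambda_j=\sum_i\lambda_i\pi_{ij}$, which is an \emph{affine} function of $\bm\pi$. Since a sum of convex functions is convex and convexity is preserved under affine substitution of the argument, it suffices to show that for each $j$ the single–variable function
\[
G_j(\Lambda)\;=\;e^{-t_j x}\,\frac{(1-\Lambda\,\mathbb E[X_j])\,t_j M_j(t_j)}{\,t_j-\Lambda\,(M_j(t_j)-1)\,}\cdot\Lambda
\]
is convex in the scalar $\Lambda\ge 0$ on the region where the constraints \eqref{formf}--\eqref{forml} hold.

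Introduce the constants $a_j:=t_jM_j(t_j)e^{-t_jx}>0$, $b_j:=M_j(t_j)-1>0$, and $c_j:=\mathbb E[X_j]>0$, none of which depends on $\bm\pi$, and write $G_j(\Lambda)=a_j\,N(\Lambda)/D(\Lambda)$ with $N(\Lambda)=\Lambda-c_j\Lambda^2$ and $D(\Lambda)=t_j-b_j\Lambda$. On the feasible region the technical constraint $\Lambda(M_j(t_j)-1)<t_j$ is precisely $D(\Lambda)>0$. Since $D$ is affine, a direct computation gives
\[
G_j''(\Lambda)\;=\;a_j\,\frac{N''(\Lambda)D(\Lambda)^2+2b_jN'(\Lambda)D(\Lambda)+2b_j^2N(\Lambda)}{D(\Lambda)^3}.
\]
The crux is to expand the numerator in powers of $\Lambda$: substituting $N''=-2c_j$, $N'=1-2c_j\Lambda$, $N=\Lambda-c_j\Lambda^2$, and $D=t_j-b_j\Lambda$, the coefficients of $\Lambda$ and of $\Lambda^2$ cancel identically, and one is left only with the constant term $2t_j(b_j-c_jt_j)$. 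Hence $G_j''(\Lambda)=2a_jt_j(b_j-c_jt_j)/D(\Lambda)^3$, and since $D(\Lambda)>0$ on the feasible region it remains only to verify $b_j-c_jt_j\ge 0$, i.e. $M_j(t_j)\ge 1+t_j\,\mathbb E[X_j]$. This follows from Jensen's inequality combined with $e^u\ge 1+u$: $M_j(t_j)=\mathbb E[e^{t_jX_j}]\ge e^{t_j\mathbb E[X_j]}\ge 1+t_j\mathbb E[X_j]$. Therefore $G_j''\ge 0$.

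Putting the pieces together: each $G_j$ is convex in $\Lambda_j$, each $\Lambda_j$ is affine in $\bm\pi$, so each $G_j(\Lambda_j(\bm\pi))$ is convex in $\bm\pi$, and the objective — their sum — is convex in $\bm\pi$ on the region described by \eqref{formf}--\eqref{forml}.

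\noindent\textbf{Main obstacle.} I expect the only genuine difficulty to be the algebraic collapse of the numerator of $G_j''$: it is not evident a priori that the $\Lambda$- and $\Lambda^2$-terms vanish, and it is exactly this cancellation that reduces the sign of the second derivative to the clean condition $M_j(t_j)\ge 1+t_j\mathbb E[X_j]$, which is then a one-line convexity argument. Everything else (separability, affineness of $\Lambda_j$, closure of convexity under sums and affine precomposition) is routine.
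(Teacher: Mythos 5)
Your proof is correct, and it reaches the conclusion by a somewhat different route than the paper. You compute the second derivative of the full rational function $G_j(\Lambda)=a_j N(\Lambda)/D(\Lambda)$ directly and observe the algebraic collapse of the numerator of $G_j''$ to the constant $2t_j(b_j-c_jt_j)$; the sign then reduces to $M_j(t_j)\ge 1+t_j\mathbb{E}[X_j]$, which you get from $\mathbb{E}[e^{t_jX_j}]\ge 1+t_j\mathbb{E}[X_j]$. The paper instead factors the same term as $\Lambda_j H_j$ with $H_j=\frac{1-C_1\Lambda_j}{1-C_2\Lambda_j}$, $C_1=\mathbb{E}[X_j]$, $C_2=(M_j(t_j)-1)/t_j$, shows $H_j$ is nonnegative, increasing, and convex in $\Lambda_j$ (its derivatives carry the factor $C_2-C_1\ge 0$, which is exactly your $b_j-c_jt_j\ge 0$ after rescaling), and then invokes the fact that a product of two nonnegative increasing convex functions is convex. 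So the two arguments hinge on the identical key inequality; the paper's decomposition avoids the (slightly delicate) cancellation bookkeeping at the cost of appealing to the product-of-convex-functions lemma, while your computation is more self-contained and, as a bonus, yields the exact closed form $G_j''=2a_jt_j(b_j-c_jt_j)/D(\Lambda)^3$, showing the bound on the second derivative is tight. Both arguments, like the theorem itself, hold only on the feasible region where $D(\Lambda_j)>0$ (constraint (\ref{forml})), which you state explicitly; that region is an interval in $\Lambda_j$ and its preimage under the affine map $\bm\pi\mapsto\Lambda_j$ is convex, so your composition step is sound.
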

\begin{proof}
Since the sum of convex functions is convex, it is enough to show that $\Lambda_j \frac{(1-\rho_j)}{t_j - \Lambda_j (M_j(t_j)-1)}$ is convex. Since $\Lambda_j$ is a linear function of  ${\bm \pi}$, it is enough to prove that $\Lambda_j \frac{(1-\rho_j)}{t_j - \Lambda_j (M_j(t_j)-1)}$ is convex in $\Lambda_j $. Let $H_j = \frac{1-\rho_j }{1 - \Lambda_j (M_j(t_j)-1)/t_j}$. We need to show that $\Lambda_j H_j$ is convex in $\Lambda_j $. 

We will first show that $H_j$ is increasing and convex in $\Lambda_j $.  
%
We note that $H_j$ can be written as
\begin{equation}
H_j = \frac{1-\Lambda_j C_1}{1-\Lambda_j C_2},
\end{equation}
where $C_1 = \frac{1}{\alpha_j}+\beta_j$ and $C_2 = \frac{M_j(t_j)-1}{t_j}$. Further $C_2\ge C_1$ since $M_j(t_j) -1 = {\mathbb E}[e^{t_jX_j}]-1 \ge {\mathbb E}[1+{t_jX_j}] -1 = t_j{\mathbb E}[{X_j}] = t_j\left(\frac{1}{\alpha_j}+\beta_j\right)$.
Differentiating $H_j$ w.r.t. $\Lambda_j$, we have
\begin{eqnarray}
\frac{\delta}{\delta \Lambda_j} H_j &=& \frac{C_2-C_1}{(1-\Lambda_j C_2)^2}\ge 0\\
\frac{\delta^2}{\delta \Lambda_j^2} H_j &=& 2C_2\frac{C_2-C_1}{(1-\Lambda_j C_2)^3} \ge 0.
\end{eqnarray}

Thus, $H_j$ is an increasing convex function of $\Lambda_j$. Since $\Lambda_j$ is also an increasing convex function of $\Lambda_j$ and the product of two increasing convex functions is convex, the result follows. 
\end{proof}

\section{Algorithm for WLTP Optimization}

We note that the WLTP optimization problem is convex with respect to individual ${\mathbf t}$ and ${\bm \pi}$. We note that the strict $<$ constraint can be modified as $\le -\epsilon$ for an $\epsilon$ small enough. The constraints are also convex in each of the variables individually.  We will now develop an alternating minimization algorithm to solve the problem. In order to describe the Algorithm, we first define the three sub-problems:

{\bf ${\mathbf t}$-Optimization: } Input ${\bm \pi}, \mathcal{\boldsymbol{S}}$
\begin{eqnarray}
&\min&  \sum_i \omega_i\left (\sum_{j}\frac{\pi_{ij}}{e^{t_j x}} \frac{(1-\rho_j)t_j M_j(t_j)}{t_j - \Lambda_j (M_j(t_j)-1)} \right)\nonumber\\
&s.t.& \eqref{formf}, \eqref{mjt},\eqref{opt_1},\eqref{tjo}, \eqref{forml}\nonumber\\
%
%
&var.& {\mathbf t} \nonumber
\end{eqnarray}

{\bf ${\bm \pi}$-Optimization: } Input ${\mathbf t}, \mathcal{\boldsymbol{S}}$
\begin{eqnarray}
&\min&  \sum_i \omega_i\left (\sum_{j}\frac{\pi_{ij}}{e^{t_j x}} \frac{(1-\rho_j)t_j M_j(t_j)}{t_j - \Lambda_j (M_j(t_j)-1)} \right)\nonumber\\
&s.t.&  \eqref{formf}, \eqref{mjt},\eqref{opt_1},\eqref{opt_3}, \eqref{rhocon}, \eqref{opt_5}, \eqref{forml}\nonumber\\
&var.& {\bm \pi}\nonumber
\end{eqnarray}

{\bf ${\bm{  \mathcal{S} }}$-Optimization: } Input ${\mathbf t}, {\bm \pi}$
\begin{eqnarray}
&\min&  \sum_i \omega_i\left (\sum_{j}\frac{\pi_{ij}}{e^{t_j x}} \frac{(1-\rho_j)t_j M_j(t_j)}{t_j - \Lambda_j (M_j(t_j)-1)} \right)\nonumber\\
&s.t.&  \eqref{formf}, \eqref{mjt},\eqref{opt_1},\eqref{opt_3}, \eqref{rhocon}, \eqref{opt_5}, \eqref{plc}\nonumber\\
&var.& \bm{  \mathcal{S} } \nonumber
\end{eqnarray}

The first two sub-problems ({\bf ${\mathbf t}$-Optimization} and {\bf ${\mathbf \pi}$-Optimization}) are convex, and thus can be solved by Projected Gradient Descent Algorithm.

For the placement sub-problem ({\bf ${\mathcal{\boldsymbol{S}}}$-Optimization}), we consider optimizing over $\mathcal{S}$ for each file request separately with fixed
$\bm \pi$ and ${\mathbf t}$.  We first rewrite the latency tail probability for file $i$, $\mathbb{P}({L_i}\geq x)$ as follows

\begin{eqnarray}
\mathbb{P}({L_i}\geq x)  &{\leq}& \sum_{j}\frac{\pi_{ij}}{e^{t_j x}} \frac{(1-\rho_j)t_j M_j(t_j)}{t_j - \Lambda_j (M_j(t_j)-1)} \\
  &=& \sum_j \frac{\pi_{ij}}{e^{t_j x}} F\left(\Lambda_j\right),
 \end{eqnarray}
where $F\left(\Lambda_j\right)=\frac{(1-\Lambda_j(1/\alpha_j+\beta_j))t_j M_j(t_j)}{t_j - \Lambda_j (M_j(t_j)-1)}$. To show that the placement sub-problem can be cast
into a bipartite matching, we consider the problem of placing file
$i$ chunks on $m$ available servers. Note that placing the chunks
is equivalent to permuting the existing access probabilities $\left\{ \pi_{ij},\,\forall i\right\} $
on all $m$ servers, because $\pi_{ij}>0$ only if a chunk of file
$i$ is placed on server $j$. Let $\beta$ be a permutation of $m$
elements. Under new placement defined by $\beta$, the new probability
of accessing file $i$ chunk on server $j$ becomes $\pi_{i,\beta(j)}$.
Thus, our objective in this sub-problem is to find such a placement
(or permutation) $\beta(j)$ $\forall j$ that minimizes the average
tail probability, which can be solved via a matching problem between
the set of existing scheduling probabilities $\left\{ \pi_{ij},\,\forall i\right\} $
and the set of $m$ available servers, with respect to their load
excluding the contribution of file $i$ itself. Let $\varLambda_{j}^{-i}=\varLambda_{j}-\lambda_{i}\pi_{ij}$
when request for file $i$, be the total request rate at server $j$,
excluding the contribution of file $i$. We define a complete bipartite
graph $\mathcal{G}_{r}=\left(\mathcal{U},\mathcal{V},\mathcal{E}\right)$
with disjoint vertex sets $\mathcal{U},\mathcal{V}$ of equal size
$m$ and edge weights given by

\begin{equation}
D_{u,v}=\sum_{i}\frac{\lambda_{i}\pi_{iu} }{ e^{t_u x}} F\left(\Lambda_v^{-i}+\lambda_i \pi_{iu} \right),
\,\,\forall u,v \, ,
\end{equation}
which quantifies the contribution to overall latency tail probability by assigning existing $\pi_{iu}$ to
server $v$ that has an existing load $\varLambda_{j}^{-i}$. It can
be shown that a minimum-weight matching on $\mathcal{G}_{r}$ finds
an optimal $\beta$ to minimize

\begin{equation}
\sum_{u=1}^{m}D_{u,\beta(u)}=\sum_{u=1}^{m}\sum_{i=1}^{r}\frac{\lambda_{i}\,\pi_{iu} }{ e^{t_j x}} F\left(\Lambda_{\beta(u)}^{-i}+\lambda_i \pi_{iu}\right),
\end{equation}

The proposed algorithm
for solving latency tail probability problem is shown in Algorithm 1, where the order of files whose placements are optimized one after the other are chosen at random. Note that the order of the files whose decisions are changed make a difference. In the proposed algorithm, we take a single pass over the files since there is an outer loop of alternating minimization.

\begin{algorithm}[t]
\caption{ Proposed algorithm for solving latency tail probability Problem}

Initialize $k=0$, $\epsilon>0$,

Initialize feasible $\left\{ t_{i}(0),\pi_{i,j}(0),\right\} $

\textbf{while} $\mbox{obj}\left(k\right)-\mbox{obj}\left(k-1\right)\geq\epsilon$

$\quad$//\textit{\small{}Solve scheduling, auxiliary variables and placement with given
$\left\{ t_{i}(k),\,\pi_{i,j}(k),\,\mathcal{S}_{i}(k)\right\} $}{\small \par}


$\quad$\textbf{Step 1}: $\boldsymbol{t}(k+1)=  \underset{\boldsymbol{t}}{\text{\text{argmin}}} (\ref{objeq})$ s.t. \eqref{formf},\eqref{mjt},\eqref{opt_1},\eqref{tjo},\eqref{forml}

$\quad$\textbf{Step 2}: $\boldsymbol{\pi}(k+1)= \underset{\boldsymbol{\pi}}{\text{\text{argmin}}} (\ref{objeq} )$
s.t. \eqref{formf},\eqref{mjt},\eqref{opt_1},\eqref{opt_3}, \eqref{rhocon},\eqref{opt_5},\eqref{forml}

$\quad$//Solve placement with given \textit{\small{}$\left\{\boldsymbol{t}(k+1),\,\boldsymbol{\pi}(k+1)\right\} $}{\small \par}


$\quad$\textbf{Step 3}: 

$ \quad\quad \kappa =\text{random permuation for files \ensuremath{\left(1,\dots,r\right)}}$

$\quad\quad$ \textbf{for} $y = 1, \cdots, r$

$\quad\quad\quad i = \kappa(y)$

$\quad\quad\quad$Calculate $\varLambda_{j}^{\left(-i\right)}$ using $\pi_{ij}(k+1)$

$\quad\quad\quad$Calculate $D_{u,v}$ from (39). 

$\quad\quad\quad$$\left(\beta(u)\;\forall j\right)=\mbox{\ensuremath{Hungarian\,Algorithm(D_{u,v})}}$

$\quad\quad\quad$Update $\pi_{i,\beta(u)}\left(k+1\right)=\pi_{i,u}(k+1)$
$\forall i,j$.

$\quad\quad\quad$ $\mathcal{S}_{i}\left(k+1\right)= \{\beta(u) \forall u \in \mathcal{S}_{i}\left(k\right) \}$

$\quad$\textbf{end for}

\textbf{end while}

\textbf{output: $\left\{ \boldsymbol{\pi},\,\mathcal{\boldsymbol{S}},\,\boldsymbol{t}\right\} $}
\end{algorithm}



Since the objective is non-negative and the objective is non-increasing in each iteration, the algorithm converges.

\section{Numerical Results}

To validate the proposed tail latency bound and tail latency optimization, we employ a hybrid simulation method, which generates chunk service times based on real system measurements on Tahoe and Amazon S3 servers in \cite{Yu_TON,CS14,Yu-TON16}, 
and compare the performance of our proposed latency optimization, denoted as  WLTP Policy, with five baseline strategies. The proposed strategy and the other baseline strategies are described below.

\begin{itemize}[leftmargin=*]
\item Proposed Approach-Optimized Placement, i.e., WLTP ({\em Weighted  Latency Tail Probability}) Policy: The joint scheduler is determined by the proposed solution that minimizes the weighted  latency tail probabilities, with respect to the three sets of variables: chunk placement on the servers ${\bm{  \mathcal{S} }}$, auxiliary variables ${\mathbf t}$, and the scheduling
policy $\bm \pi$.


\item Proposed Approach-Random Placement, i.e., WLTP-RP ({\em WLTP - Random Placement}) Policy: The chunks are placed uniformly at random. With this fixed placement, the weighted  latency tail probability is optimized over the remaining two sets of variables: auxiliary variables ${\mathbf t}$, and the scheduling
policy $\bm \pi$.


\item  WLTP-RP-Fixed ${\mathbf t}$ Policy:  The chunks are placed uniformly at random, and all the auxiliary variables $t_j$ are set as $0.01$. The   weighted  latency tail probability is optimized over the scheduling access probabilities  $\bm \pi$.


\item PEAP ({\em Projected, Equal Access-Probability}) Policy: For each file request, the joint request scheduler selects available nodes with equal probability. This choice of $\pi_{i,j}= k_i/n_i$ may not be feasible and thus the  access probabilities are projected toward feasible region in (\ref{objeq}) for all $t_j = .01$ for a uniformly randomly placed files to ensure stability of the storage system. With these fixed access probabilities,  the weighted  latency tail probability is optimized over the remaining two sets of variables: chunk placement on the servers ${\bm{  \mathcal{S} }}$, and the auxiliary variables ${\mathbf t}$.


\item PEAP-RP Policy: As compared to the PEAP Policy, the chunks are placed uniformly at random. The weighted  latency tail probability is optimized over  the choice of auxiliary variables ${\mathbf t}$.

\item  PSPP ({\em Projected Service-Rate Proportional Allocation}) Policy: The joint request scheduler 
chooses the access probabilities to be proportional to the service rates of the storage nodes, i.e., $\pi_{ij}=k_{i}\frac{\mu_{j}}{\sum_{j}\mu_{j}}$. This policy assigns servers proportional to their service rates. These access probabilities are projected toward feasible region in (\ref{objeq}) for a uniformly random placed files to ensure stability of the storage system.  With these fixed access probabilities,  the weighted  latency tail probability is optimized over the remaining two sets of variables: chunk placement on the servers ${\bm{  \mathcal{S} }}$, and the auxiliary variables ${\mathbf t}$.



\item  PSPP-RP   ({\em  PSPP - Random Placement}) Policy: As compared to the PSPP Policy, the chunks are placed uniformly at random. The weighted  latency tail probability is optimized over  the choice of auxiliary variables ${\mathbf t}$.


\end{itemize}

In the simulations, we consider $r=1000$ files, all of size 200 MB and using $(7,4)$ erasure code in a distributed storage system consisting of $m=12$ distributed nodes. Based on \cite{Yu_TON,CS14,Yu-TON16}, we consider chunk service time that follows a shifted-exponential distribution with rate $\alpha_j$ and shift $\beta_j$. As shown in Table~\ref{tb}, we have 12 heterogeneous storage nodes with different service rates $\alpha_j$ and shifts $\beta_j$. The base arrival rates for the first 250 files is $2/150$ s$^{-1}$, for the next 250 files are $4/150$ s$^{-1}$,  for the next 250 files are $6/150$ s$^{-1}$, and for the last 250 files is $3/150$ s$^{-1}$. 
 This paper also considers the weights of the files proportional to the arrival rates. 

 In order to initialize the algorithm, we choose $\pi_{ij} = k/n$ on the placed servers, all $t_j = .01$. 
 However, since these choices of $\mathbf \pi$ and ${\mathbf t}$ may not be feasible, we modify the initialization ${\bm \pi}$ to be the closest norm feasible solution to the above choice.

\begin{table}[ht]
\caption{Summary of parameters for the 12 storage nodes in our simulation (shift $\beta$ in ms and rate $\alpha$ in 1/s).}\centering
  \resizebox{.48 \textwidth}{!}{
\begin{tabular}{c c c c c c c}
\hline\hline
 & Node 1 & Node 2 & Node 3  & Node 4 & Node 5 & Node 6 \\ [0.25ex]
\hline
$\alpha_j$ & 20.0015  & 26.1252 &  14.9850 & 17.0526  & 27.1422  & 22.8919 \\
$\beta_j$ & 10.5368  & 15.6018 &  8.2756  &  10.0120 &  12.8544 & 13.6722 \\ [1ex]
\hline\hline
 & Node 7 & Node 8 & Node 9  & Node 10 & Node 11 & Node 12 \\ [0.25ex]
\hline
$\alpha_j$ & 30.0000 &  21.3812  &  11.9106 & 25.1599  & 28.8188 & 23.8067  \\
$\beta_j$ & 12.6616   & 9.9156  & 10.7872 & 8.6166  & 13.8721 &  10.8964 \\ [0.25ex]
\hline
\end{tabular}}
\label{tb}
\end{table}



\begin{figure}
	\begin{center}
		\includegraphics[trim=.0in 0in 5.7in .0in, clip,width=0.45\textwidth]{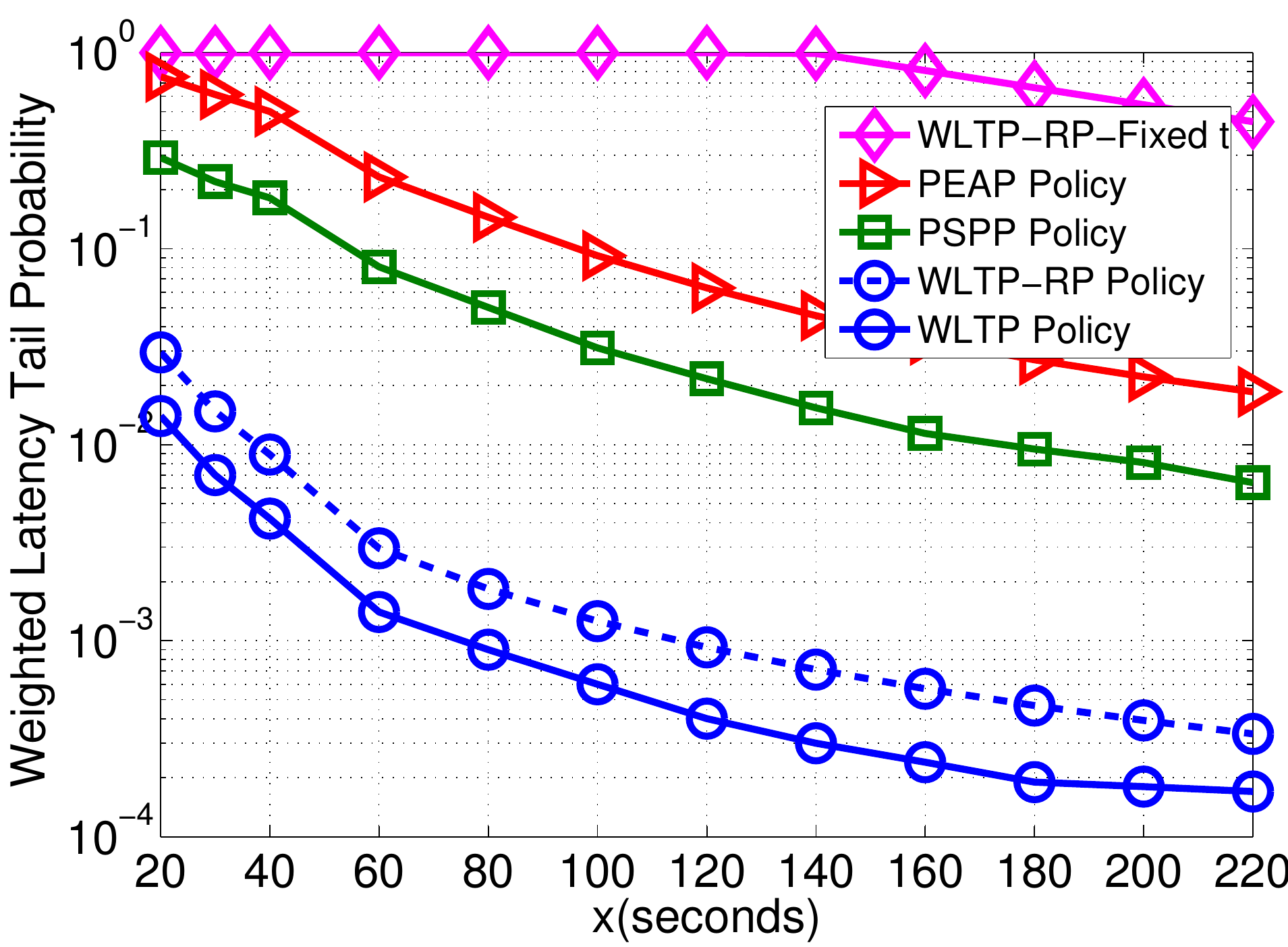}
		\caption{Weighted  Latency Tail Probability vs $x$ (in seconds).}
		\label{tailfig}
	\end{center}
\end{figure}

\noindent {\bf Weighted  Latency Tail Probabilities: } In Figure~\ref{tailfig}, we plot the decay of weighted  latency tail probability $\sum_i \omega_i\Pr(L_i \ge x)$ with $x$ (in seconds) for Policies WLTP, WLTP-RP, PSPP, PEAP, and WLTP-RP-Fixed ${\mathbf t}$. Notice that WLTP Policy solves the optimal weighted latency tail probability via proposed alternative optimization algorithm over  $\bm \pi$, ${\mathbf t}$, and  ${\bm{  \mathcal{S} }}$. With optimized ${\mathbf t}$ and placement, Policy PEAP uses equal server access probabilities, projected toward the feasible region, while Policy PSPP assign chunk requests across different servers proportional to their service rates. The values of ${\mathbf t}$ are then found optimally for the above given values of $\pi_{i,j}$. Note that this figure also represents the complementary cumulative distribution function (ccdf) of the WLTP, WLTP-RP, WLTP-RP-Fixed ${\mathbf t}$, PSPP, and PEAP. For instance, We observe that $\text{Pr}\left(x\geq20\right)\approx0.01$ for our proposed policy WLTP which is significantly lower as compared to the other strategies.





We note that our proposed algorithm for jointly optimizing ${\bm \pi}$, ${\mathbf t}$ and $\boldsymbol{\mathcal{S}}$ provides significant improvement over simple heuristics such as Policies WLTP-RP-Fixed ${\mathbf t}$, PSPP, and PEAP, as weighted  latency tail probability reduces by  orders of magnitude. For example, our proposed Policy WLTP decreases 99-percentile weighted latency (i.e., $x$ such that $\sum_i \omega_i\Pr(L_i \ge x)\le 0.01$) from above 160 seconds in the baseline policies to about 20 seconds using WLTP.  We also notice that chunk placement optimization reduces  the latency tail probability  for all $x$, as  can be seen from Figure~\ref{tailfig} among the policies WLTP and WLTP-RP. Uniformly accessing servers and simple service-rate-based scheduling are unable to optimize the request scheduler based on factors like chunk placement, request arrival rates, different latency weights, thus leading to much higher tail latency. Since the policy WLTP-RP-Fixed ${\mathbf t}$ performs significantly worse than the other considered policies, we do not include this policy in the rest of the paper.

\begin{figure}
	\begin{center}
		\includegraphics[trim=.1in .1in 6.1in .1in, clip,width=0.45\textwidth]{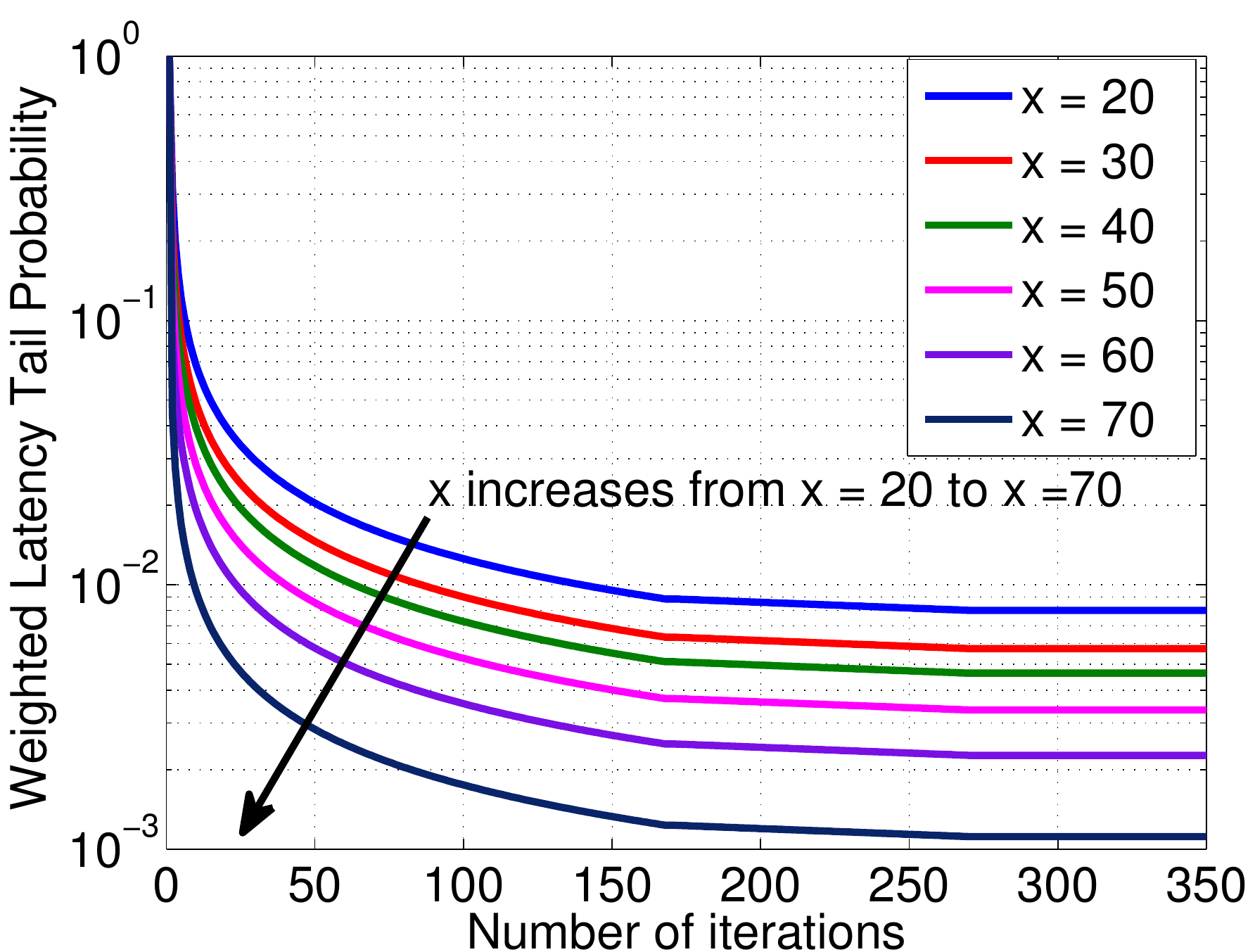}
		\caption{Convergence of Weighted Latency Tail Probability.}
		\label{convfig}
	\end{center}
\end{figure}

\noindent {\bf Convergence of the Proposed Algorithm: } We have shown that the proposed algorithm converges within about 350 iterations to the optimal objective value, validating the efficiency of the proposed optimization algorithm. To illustrate its convergence speed, Figure \ref{convfig} shows the convergence of objective value vs. the number of iterations for different values of $x$ ranging from 20 to 70 seconds in increments of 10 seconds. 
In the rest of the results, 350 iterations will be used to get the required results.





\begin{figure}
	\begin{center}
		\includegraphics[trim=.0in .0in 5.3in .0in, clip,width=0.45\textwidth]{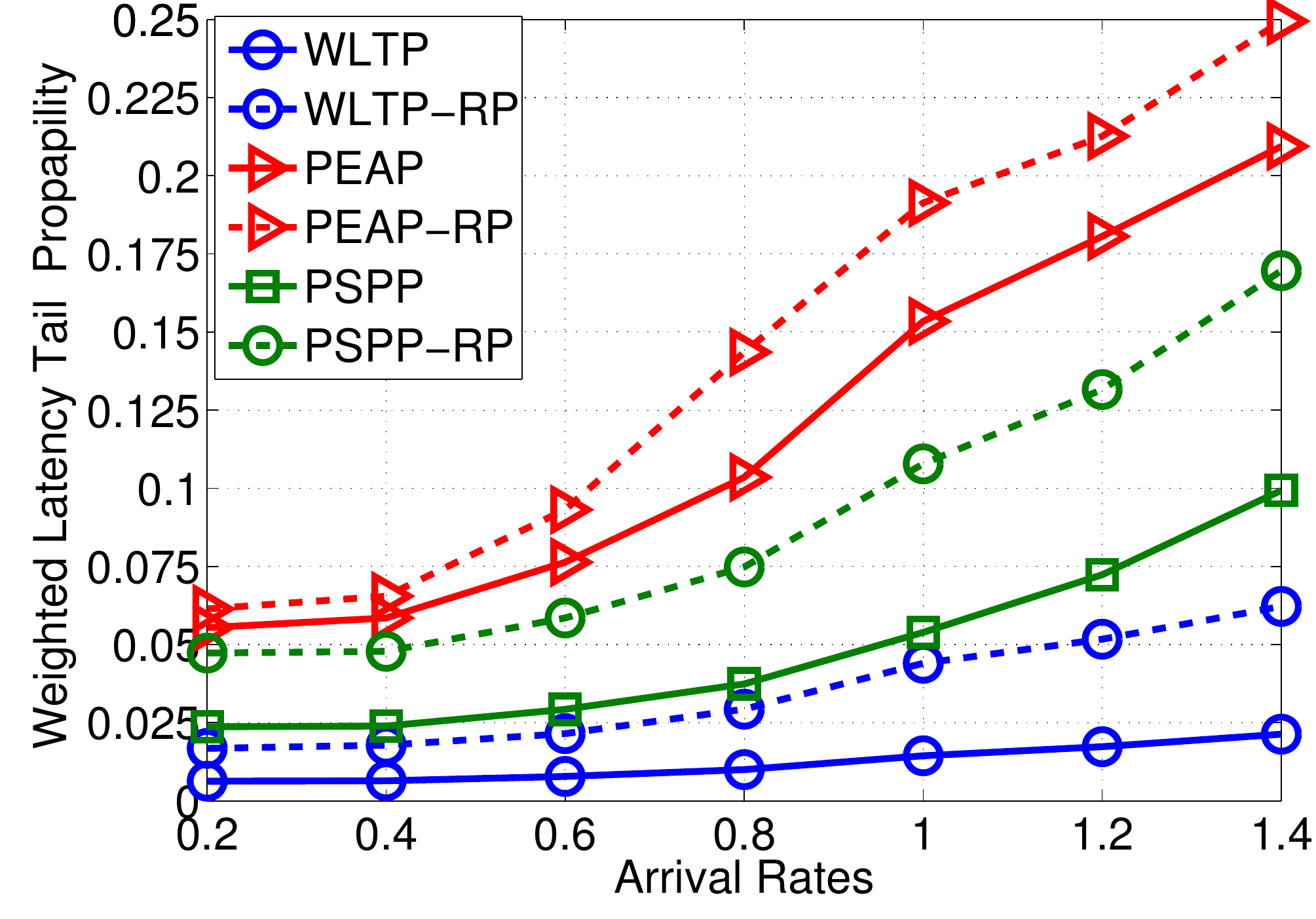}
		\caption{Weighted Latency Tail Probability for different file arrival rates. We vary
the arrival rate of file i from $0.2\times\lambda$ to $1.4\times\lambda$, where $\lambda$ is the base
arrival rate.}
		\label{arrfig}
	\end{center}
\end{figure}

\noindent {\bf Effect of Arrival Rates: } We next see the effect of varying request arrival rates on the weighted latency tail probability. We choose $x=50$ seconds. For $\lambda$ as the base arrival rates, we increase arrival rate of all files from $.2\lambda$ to $1.4\times\lambda$ and plot the weighted  latency tail probability in Figure~\ref{arrfig}. While latency tail probability increases as arrival rate increases, our algorithm assigns differentiated latency for different files to maintain low weighted latency tail probability. We compare
our proposed algorithm with the different baseline policies and notice that the proposed algorithm outperforms all baseline strategies.

Since the weighted latency tail probability is more significant at high arrival
rates, we observe significant improvement in latency tail
by about a multiple of 9 ( approximately 0.025 to about 0.22) at the highest
arrival rate in Figure~\ref{arrfig} between PEAP and WLTP policies. Our proposed policy always receives the minimum latency. Thus, efficiently reducing the latency of the high arrival rate files reduces the overall weighted latency tail probability.  


\begin{figure}
	\begin{center}
		\includegraphics[trim=.0in .1in 5.8in .0in,clip,width=0.45\textwidth]{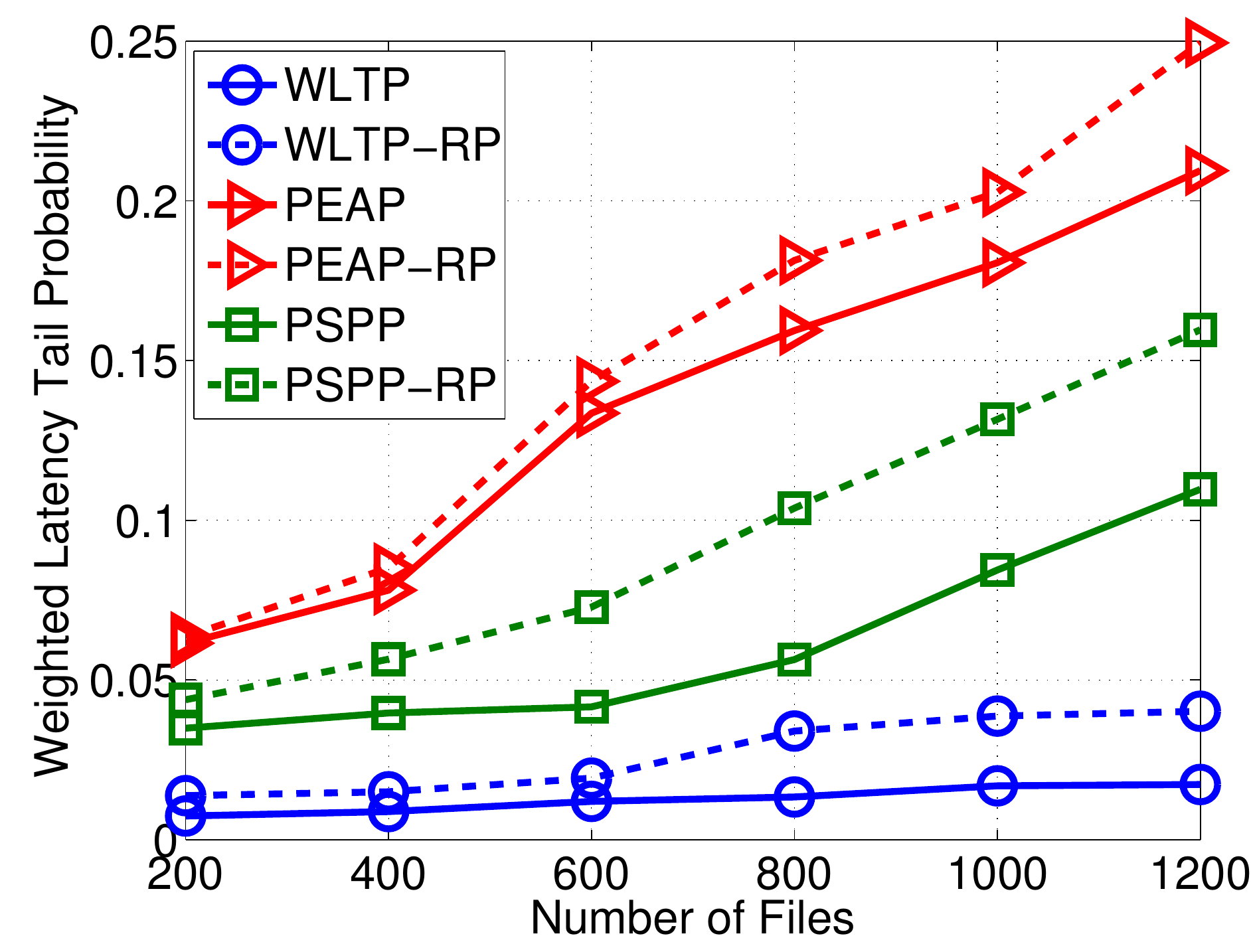}
		\caption{Weighted Latency Tail Probability for different number of files.}
		\label{filesfig}
	\end{center}
\end{figure}

\noindent {\bf Effect of Number of files: } Figure \ref{filesfig} demonstrates the impact of varying the number of  files from 200  to 1200  on the weighted latency tail probability. Weighted latency tail  probabilities increases with the number of files, which brings in more workload (i.e., higher arrival rates). Our optimization algorithm optimizes new files along with existing ones to keep overall weighted latency tail probability at a  low level. We note that the proposed optimization strategy effectively reduces the tail probability and outperforms the considered baseline strategies. Thus, joint optimization over all three variables $\boldsymbol{\mathcal{S}}$, $\boldsymbol{\pi}$, and $\boldsymbol{t}$ helps reduce the tail probability significantly.

\begin{figure}
	\begin{center}
		\includegraphics[trim=0in .1in 5.8in .0in, clip,width=0.45\textwidth]{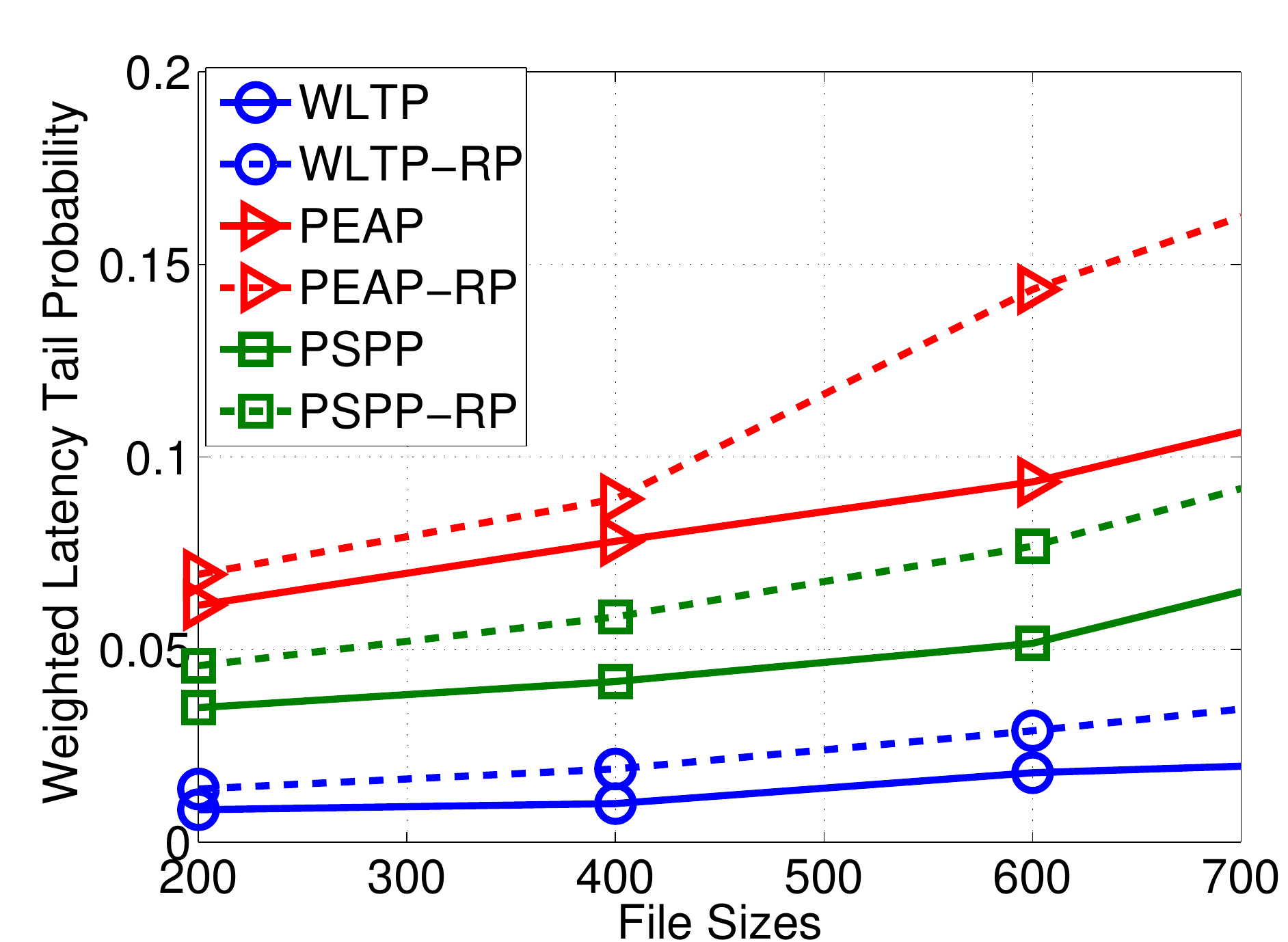}
		\caption{Weighted Latency Tail Probability for different file sizes.}
		\label{filesizefig}
	\end{center}
\end{figure}

\noindent {\bf Effect of File Sizes: } We vary the file size in our simulation from $200$MB to $700$MB, and plot the optimal weighted latency tail probability with varying file size in Figure \ref{filesizefig}. In order to capture the effect of increased file size as compared to a default size of 200 MB, we increase the value of parameters $\alpha$ and $\beta$ proportionally to the chunk size in the shifted-exponential service time distribution. While increasing file size results in higher weighted tail latency probability for files, we compare our proposed algorithm with the baseline policies and verified that the proposed optimization algorithm offers significant reduction in tail latency.
\begin{figure}
	\begin{center}
		\includegraphics[trim=1.3in .1in 1.2in .0in, clip,width=0.45\textwidth]{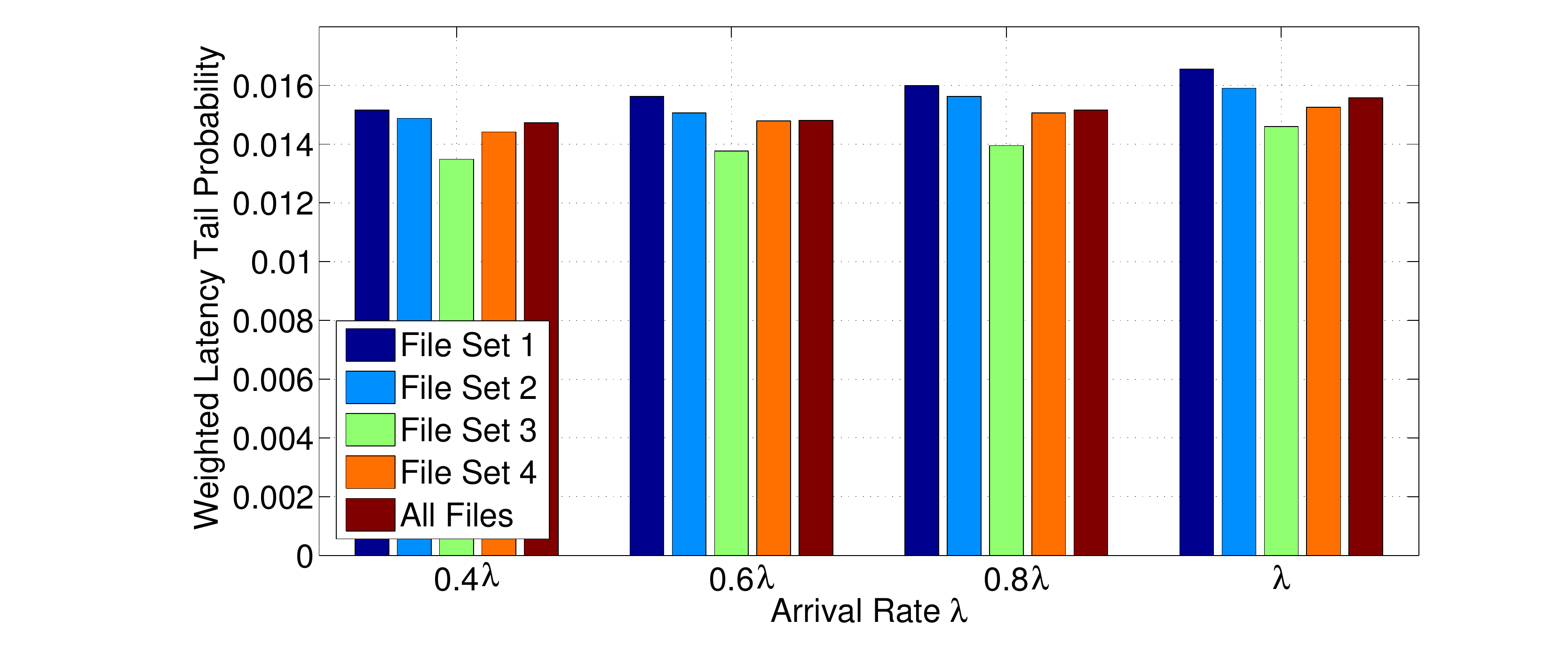}
		\caption{Weighted Latency Tail Probability for different file arrival rates.}
		\label{arrfig1}
	\end{center}
\end{figure}

\noindent {\bf Effect of the Tail Latency Weights}:  We next show the effect of varying the weights (i.e., $w_i$'s) on the weighted tail latency  probability for the proposed WLTP policy. We choose $x=40$ seconds. 
%
Recall that the arrival rate of files was divided into four groups, each with different arrival rates. We vary the arrival rate of all files from $.4\lambda$ to $\lambda$ with a step of $0.2\lambda$ and plot the weighted  latency tail probability for each group of 250 files as well as the overall value in Figure~\ref{arrfig1}. While weighted latency tail probability increases as arrival rate increases, our algorithm assigns differentiated latency for different file groups. Group 3 that has highest weight $\omega_3$ (i.e., most tail latency sensitive) always receive the minimum latency tail probability even though these files have the highest arrival rate. Thus, efficiently reducing the latency of the high arrival rate files   reduces the overall weighted latency tail probability.  We note that  efficient access probabilities $\bm \pi$ help in differentiating file latencies as compared to the strategy where minimum queue-length  servers are selected to access the content obtaining lower weighted tail latency probability.


\begin{figure}
	\begin{center}
		\includegraphics[trim=0.5in .1in 1.0in .0in, clip,width=0.5\textwidth]{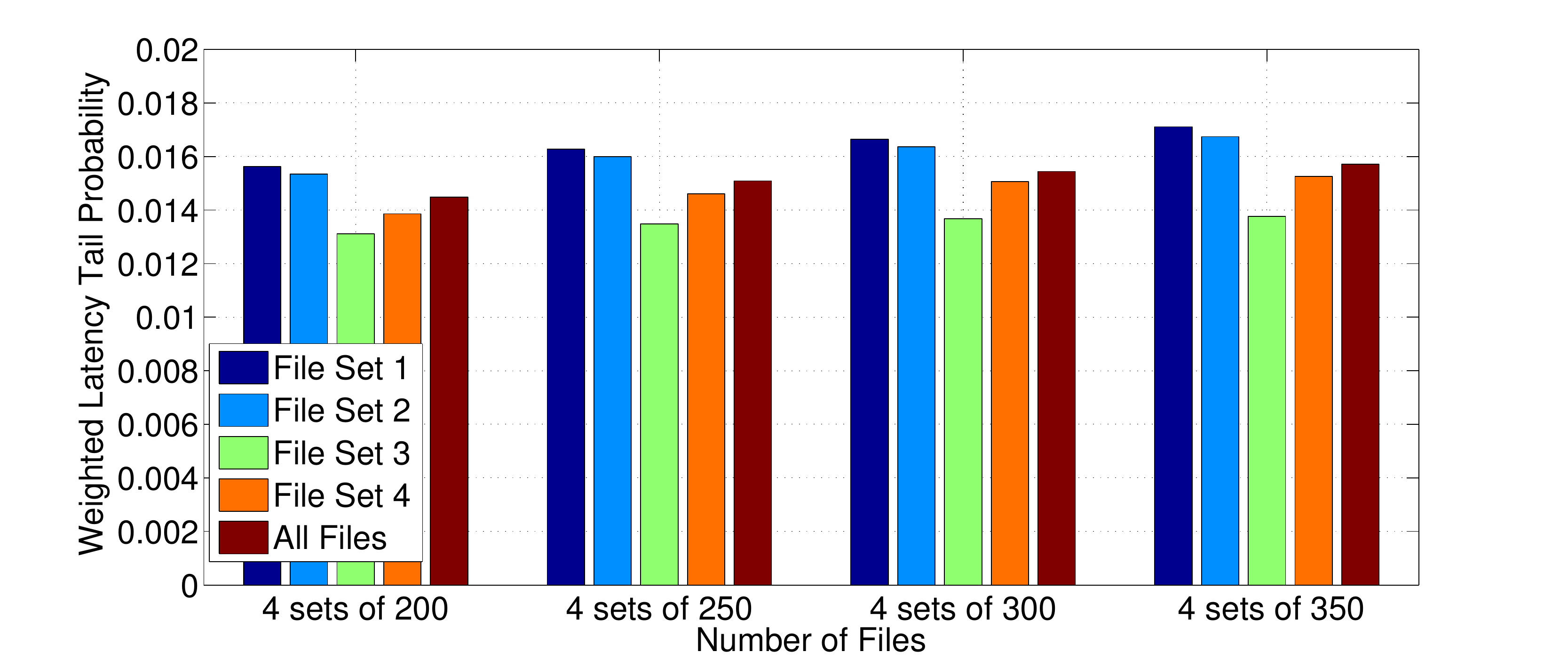}
		\caption{Weighted Latency Tail Probability for different file arrival rates.}
		\label{NoOfFiles}
	\end{center}
\end{figure}

Figure~ \ref{NoOfFiles} shows the effect of the file weights on the weighted latency tail probability for varying number of files. In particular, we modify the number of files in each group from $250$ in the base case to values such
as $250$, $300$, and $350$, as shown in  Figure~ \ref{NoOfFiles}. Apparently, the weighted tail latency  probability increases with the number of files,
since that increases the workload in terms of the arrival rates.
Our optimization algorithm optimizes the placement and access of the files to keep weighted  tail  latency probability lower. As noted earlier, the higher arrival rate group has lower tail latency probability thus reducing the overall objective. 
\section{Acknowledgment}
This work was supported in
part by the National Science Foundation under Grant no. CNS-1618335.

\section{Conclusions}
This paper provides bounds on latency tail probabilities for distributed storage systems using erasure coding. These bounds are used to formulate an optimization to jointly minimize weighted latency tail probability of all files over request scheduling and data placement. The non-convex optimization problem is solved using an efficient alternating optimization algorithm. Simulation results show significant reduction of tail latency for erasure-coded storage systems with realistic workload.

Following this work, the probabilistic scheduling used in this paper has been shown to be optimal for tail index in \cite{Pareto}. However, the model of file size is different as compared to this paper. Finding more general scenarios where such scheduling strategy is optimal, or improving the strategy to show optimality for wider classes is an important research problem.
\bibliographystyle{IEEEtran}

\bibliography{allstorage,Tian,ref_Tian2,ref_Tian3,Vaneet_cloud,Tian_rest}

\begin{thebibliography}{10}
\providecommand{\url}[1]{#1}
\csname url@samestyle\endcsname
\providecommand{\newblock}{\relax}
\providecommand{\bibinfo}[2]{#2}
\providecommand{\BIBentrySTDinterwordspacing}{\spaceskip=0pt\relax}
\providecommand{\BIBentryALTinterwordstretchfactor}{4}
\providecommand{\BIBentryALTinterwordspacing}{\spaceskip=\fontdimen2\font plus
\BIBentryALTinterwordstretchfactor\fontdimen3\font minus
  \fontdimen4\font\relax}
\providecommand{\BIBforeignlanguage}[2]{{%
\expandafter\ifx\csname l@#1\endcsname\relax
\typeout{** WARNING: IEEEtran.bst: No hyphenation pattern has been}%
\typeout{** loaded for the language `#1'. Using the pattern for}%
\typeout{** the default language instead.}%
\else
\language=\csname l@#1\endcsname
\fi
#2}}
\providecommand{\BIBdecl}{\relax}
\BIBdecl

\bibitem{Jingxian}
V.~Aggarwal, J.~Fan, and T.~Lan, ``Taming tail latency for erasure-coded,
  distributed storage systems,'' in \emph{Proc. IEEE Infocom}, Jul 2017.

\bibitem{Sathiamoorthy13}
M.~Sathiamoorthy, M.~Asteris, D.~Papailiopoulos, A.~G. Dimakis, R.~Vadali,
  S.~Chen, and D.~Borthakur, ``Xoring elephants: Novel erasure codes for big
  data,'' in \emph{Proceedings of the 39th international conference on Very
  Large Data Bases.}, 2013.

\bibitem{Asure14}
C.~Huang, H.~Simitci, Y.~Xu, A.~Ogus, B.~Calder, P.~Gopalan, J.~Li, and
  S.~Yekhanin, ``Erasure coding in windows azure storage,'' in
  \emph{Proceedings of the 2012 USENIX Conference on Annual Technical
  Conference}, ser. USENIX ATC'12.\hskip 1em plus 0.5em minus 0.4em\relax
  USENIX Association, 2012.

\bibitem{Fikes10}
A.~Fikes, ``Storage architecture and challenges (talk at the google faculty
  summit),'' http://bit.ly/nUylRW, Tech. Rep., 2010.

\bibitem{T1}
J.~Dean and L.~A. Barroso, ``The tail at scale,'' in \emph{Communications of
  the ACM}, 2013.

\bibitem{T2}
Y.~Xu, Z.~Musgrave, B.~Noble, and M.~Bailey, ``Bobtail: Avoiding long tails in
  the cloud,'' in \emph{10th USENIX Symposium on Networked Systems Design and
  Implementation (NSDI ’13)}, 2013.

\bibitem{Liang:2014}
\BIBentryALTinterwordspacing
G.~Liang and U.~C. Kozat, ``Fast cloud: Pushing the envelope on delay
  performance of cloud storage with coding,'' \emph{IEEE/ACM Trans. Netw.},
  vol.~22, no.~6, pp. 2012--2025, Dec. 2014. [Online]. Available:
  \url{http://dx.doi.org/10.1109/TNET.2013.2289382}
\BIBentrySTDinterwordspacing

\bibitem{Barroso:2011}
\BIBentryALTinterwordspacing
L.~A. Barroso, ``Warehouse-scale computing: Entering the teenage decade,'' in
  \emph{Proceedings of the 38th Annual International Symposium on Computer
  Architecture}, ser. ISCA '11.\hskip 1em plus 0.5em minus 0.4em\relax New
  York, NY, USA: ACM, 2011, pp.~--. [Online]. Available:
  \url{http://dl.acm.org/citation.cfm?id=2000064.2019527}
\BIBentrySTDinterwordspacing

\bibitem{ISIT:12}
L.~Huang, S.~Pawar, H.~Zhang, and K.~Ramchandran, ``Codes can reduce queueing
  delay in data centers,'' in \emph{Information Theory Proceedings (ISIT), 2012
  IEEE International Symposium on}, July 2012, pp. 2766--2770.

\bibitem{Joshi:13}
G.~Joshi, Y.~Liu, and E.~Soljanin, ``On the delay-storage trade-off in content
  download from coded distributed storage systems,'' \emph{Selected Areas in
  Communications, IEEE Journal on}, vol.~32, no.~5, pp. 989--997, May 2014.

\bibitem{lee2017mds}
K.~Lee, N.~B. Shah, L.~Huang, and K.~Ramchandran, ``The mds queue: Analysing
  the latency performance of erasure codes,'' \emph{IEEE Transactions on
  Information Theory}, 2017.

\bibitem{Tian_latency}
Y.~Xiang, T.~Lan, V.~Aggarwal, and Y.-F.~R. Chen, ``Joint latency and cost
  optimization for erasure-coded data center storage,'' in \emph{Proceedings of
  IFIP WG 7.3 Performance 2014}, 2014.

\bibitem{Yu_TON}
Y.~Xiang, T.~Lan, V.~Aggarwal, and Y.~F.~R. Chen, ``Joint latency and cost
  optimization for erasure-coded data center storage,'' \emph{IEEE/ACM
  Transactions on Networking}, vol.~24, no.~4, pp. 2443--2457, Aug 2016.

\bibitem{gardner2016understanding}
K.~Gardner, S.~Zbarsky, M.~Velednitsky, M.~Harchol-Balter, and
  A.~Scheller-Wolf, ``Understanding response time in the redundancy-d system,''
  \emph{ACM SIGMETRICS Performance Evaluation Review}, vol.~44, no.~2, pp.
  33--35, 2016.

\bibitem{Makowski:89}
F.~Baccelli, A.~Makowski, and A.~Shwartz, ``The fork-join queue and related
  systems with synchronization constraints: stochastic ordering and computable
  bounds,'' \emph{Advances in Applied Probability}, p. 629–660, 1989.

\bibitem{JK13}
G.~Liang and U.~Kozat, ``Fast cloud: Pushing the envelope on delay performance
  of cloud storage with coding,'' \emph{Networking, IEEE/ACM Transactions on},
  vol.~22, no.~6, pp. 2012--2025, Dec 2014.

\bibitem{LK13}
------, ``Tofec: Achieving optimal throughput-delay trade-off of cloud storage
  using erasure codes,'' in \emph{Proceedings of IEEE Infocom}, 2014.

\bibitem{CS14}
S.~Chen, Y.~Sun, U.~Kozat, L.~Huang, P.~Sinha, G.~Liang, X.~Liu, and N.~Shroff,
  ``When queuing meets coding: Optimal-latency data retrieving scheme in
  storage clouds,'' in \emph{Proceedings of IEEE Infocom}, 2014.

\bibitem{KumarTC14}
A.~Kumar, R.~Tandon, and T.~Clancy, ``On the latency and energy efficiency of
  distributed storage systems,'' \emph{IEEE Transactions on Cloud Computing},
  vol.~PP, no.~99, pp. 1--1, 2015.

\bibitem{li2016mean}
B.~Li, A.~Ramamoorthy, and R.~Srikant, ``Mean-field-analysis of coding versus
  replication in cloud storage systems,'' in \emph{Computer Communications,
  IEEE INFOCOM 2016-The 35th Annual IEEE International Conference on}.\hskip
  1em plus 0.5em minus 0.4em\relax IEEE, 2016, pp. 1--9.

\bibitem{Yu-CCGRID}
Y.~Xiang, V.~Aggarwal, Y.-F. Chen, and T.~Lan, ``Taming latency in data center
  networking with erasure coded files,'' in \emph{Cluster, Cloud and Grid
  Computing (CCGrid), 2015 15th IEEE/ACM International Symposium on}, May 2015,
  pp. 241--250.

\bibitem{Yu-ICDCS}
Y.~Xiang, T.~Lan, V.~Aggarwal, and Y.-F. Chen, ``Multi-tenant latency
  optimization in erasure-coded storage with differentiated services,'' in
  \emph{Distributed Computing Systems (ICDCS), 2015 IEEE 35th International
  Conference on}, June 2015, pp. 790--791.

\bibitem{Yu-TCC16}
Y.~Xiang, V.~Aggarwal, Y.-F. Chen, and T.~Lan, ``Differentiated latency in data
  center networks with erasure coded files through traffic engineering,''
  \emph{IEEE Transactions on Cloud Computing}, 2017.

\bibitem{Sprout}
V.~Aggarwal, Y.-F. Chen, T.~Lan, and Y.~Xiang, ``Sprout: A functional caching
  approach to minimize service latency in erasure-coded storage,'' in
  \emph{Distributed Computing Systems (ICDCS), 2016 IEEE 36th International
  Conference on}, June 2016.

\bibitem{Yu-TON16}
------, ``Sprout: A functional caching approach to minimize service latency in
  erasure-coded storage,'' \emph{Submitted to IEEE/ACM Transactions on
  Networking, available at arXiv:1609.09827}, Sept 2016.

\bibitem{WK}
H.~Weatherspoon and J.~Kubiatowicz, ``\BIBforeignlanguage{English}{Erasure
  coding vs. replication: A quantitative comparison},'' in
  \emph{\BIBforeignlanguage{English}{Peer-to-Peer Systems}}, ser. Lecture Notes
  in Computer Science, P.~Druschel, F.~Kaashoek, and A.~Rowstron, Eds.\hskip
  1em plus 0.5em minus 0.4em\relax Springer Berlin Heidelberg, 2002, vol. 2429,
  pp. 328--337.

\bibitem{AY11}
A.~Abdelkefi and Y.~Jiang, ``A structural analysis of network delay,'' in
  \emph{Communication Networks and Services Research Conference (CNSR), 2011
  Ninth Annual}, May 2011, pp. 41--48.

\bibitem{Corbett:2004}
\BIBentryALTinterwordspacing
P.~Corbett, B.~English, A.~Goel, T.~Grcanac, S.~Kleiman, J.~Leong, and
  S.~Sankar, ``Row-diagonal parity for double disk failure correction,'' in
  \emph{Proceedings of the 3rd USENIX Conference on File and Storage
  Technologies}, ser. FAST'04.\hskip 1em plus 0.5em minus 0.4em\relax Berkeley,
  CA, USA: USENIX Association, 2004, pp. 1--1. [Online]. Available:
  \url{http://dl.acm.org/citation.cfm?id=1973374.1973375}
\BIBentrySTDinterwordspacing

\bibitem{Calder:2011}
\BIBentryALTinterwordspacing
B.~Calder~et al., ``Windows azure storage: A highly available cloud storage
  service with strong consistency,'' in \emph{Proceedings of the Twenty-Third
  ACM Symposium on Operating Systems Principles}, ser. SOSP '11.\hskip 1em plus
  0.5em minus 0.4em\relax New York, NY, USA: ACM, 2011, pp. 143--157. [Online].
  Available: \url{http://doi.acm.org/10.1145/2043556.2043571}
\BIBentrySTDinterwordspacing

\bibitem{kleinrock1976queueing}
\BIBentryALTinterwordspacing
L.~Kleinrock, \emph{Queueing Systems: Theory}, ser. A Wiley-Interscience
  publication.\hskip 1em plus 0.5em minus 0.4em\relax Wiley, 1976, no. v. 1.
  [Online]. Available: \url{https://books.google.com/books?id=rUbxAAAAMAAJ}
\BIBentrySTDinterwordspacing

\bibitem{Pareto}
\BIBentryALTinterwordspacing
V.~Aggarwal and T.~Lan, ``Tail index for a distributed storage system with
  pareto file size distribution,'' \emph{CoRR}, vol. abs/1607.06044, 2016.
  [Online]. Available: \url{http://arxiv.org/abs/1607.06044}
\BIBentrySTDinterwordspacing

\end{thebibliography}

\end{document}